\documentclass{article}

    \PassOptionsToPackage{numbers, compress}{natbib}

\usepackage[preprint]{neurips_2026}

\usepackage[utf8]{inputenc} 
\usepackage[T1]{fontenc}    
\usepackage{hyperref}       
\usepackage{url}            
\usepackage{graphicx}
\usepackage{booktabs}       
\usepackage{amsmath,amssymb,amsfonts,amsthm} 
\newtheorem{proposition}{Proposition}
\newtheorem{corollary}[proposition]{Corollary}
\newtheorem{remark}[proposition]{Remark}
\usepackage{nicefrac}       
\usepackage{microtype}      
\usepackage{xcolor}         
\usepackage{colortbl}
\usepackage{enumitem}
\usepackage{subcaption}
\usepackage{multirow}
\usepackage[misc]{ifsym}

\definecolor{darkblue}{rgb}{0, 0, 0.5}
\hypersetup{colorlinks=true, citecolor=darkblue, linkcolor=darkblue, urlcolor=darkblue}

\title{MAST: Label-Efficient, Robust, and Generalizable Sound Detection for Biodiversity Monitoring via Masked Audio Pretraining and Self-Training}

\author{
\textbf{Tianyi Xu}$^{1}$ \quad
\textbf{Daniel L. Pimentel-Alarc\'{o}n}$^{1}$ \\
\textbf{Zuzana Bu\v{r}ivalov\'{a}}$^{1}$ \quad
\textbf{Claudia Sol\'{i}s-Lemus}$^{1,*}$ \\
\vspace{2pt}
$^{1}$\textnormal{University of Wisconsin--Madison}
}

\begin{document}

\maketitle

\begingroup
\renewcommand{\thefootnote}{*}
\footnotetext{Corresponding author: Claudia Sol\'{i}s-Lemus. Correspondence: \texttt{txu223@wisc.edu}, \texttt{solislemus@wisc.edu}.}
\endgroup

\begin{abstract}
Passive acoustic monitoring can measure biodiversity at larger scales, but time--frequency annotation of animal vocalizations is expensive, site-specific, and difficult to sustain at scale. We present a label-efficient sound detection framework that combines masked audio pretraining with a lightweight detector on mel spectrograms, then further improves robustness through iterative self-training on unlabeled audio. We first pretrain a ViT-based encoder on unlabeled recordings via masked reconstruction and transfer the encoder to a detection backbone. To better separate animal sounds from confounding background, we add a box-level contrastive loss that pulls matched event regions together while pushing noisy negatives apart. We then apply a two-stage pseudo-labeling curriculum to exploit large unlabeled pools without additional annotation. We evaluate the performance on two ecologically distinct domains: tropical rainforest soundscapes (Indonesia) and bird vocalizations in Mediterranean habitats (Spain). On both domains, masked audio pretraining and contrastive learning consistently improve time--frequency detection under temporal and cross-site distribution shift, and self-training yields further gains in out-of-distribution performance. On the rainforest domain, MAST with self-training achieves +0.22 mAP and +0.24 F1 over the strongest baseline under cross-site shift. On the bird domain, self-training achieves +0.12 mAP and +0.10 F1 over the strongest baseline under cross-site shift. Overall, our results show that MAST can effectively extend self-supervised audio representations from clip-level tasks to robust box-level localization across diverse bioacoustic settings, providing a practical path for biodiversity monitoring with limited labels.
\end{abstract}

\section{Introduction}

Passive acoustic monitoring (PAM) offers a scalable, non-invasive way to measure biodiversity at landscape scales \citep{pam,wood2024scalable,Lassandro03092025,burivalova2019soundscapes}. Autonomous recorders can operate for months, capturing rich soundscapes that contain vocalizations from birds, mammals, insects, amphibians, as well as environmental and anthropogenic sounds \citep{merchant2015measuring,pam_eco}. Turning these recordings into actionable ecological information typically follows two broad directions: one can use acoustic indices that summarize the properties of the soundscape, or species-specific recognition models that predict whether a known target sound occurs in a recording segment. However, the former does not identify which species produced the sounds, while the latter is typically limited to predefined classes of acoustically known species. In hyperdiverse and acoustically understudied soundscapes such as tropical rainforests \citep{sun2022classification}, an intermediate approach is needed: a system that can automatically localize animal sounds in both time and frequency, so that they can later be classified into known taxa or presented to specialists as potentially novel events. The main bottleneck is therefore not data collection, but annotation: extracting ecological signals from raw audio requires experts to mark time--frequency regions of animal sounds, which is expensive and difficult to scale \citep{NAPIER2024124220}. In addition, real-world soundscapes are highly non-stationary, with background conditions shifting across day/night cycles, weather, habitat structure, sensor placement, and species composition \citep{liang2024mind,van2024birds}. As a result, models trained on a single site or day often generalize poorly \citep{liang2024mind}.

A common strategy is to treat spectrograms as images and apply supervised detectors such as Faster R-CNN, YOLO, or FCOS \citep{eventness,hamard2024deep}. While effective in fully supervised settings, these models typically require many annotations and can overfit to local recording conditions, leading to degraded performance under temporal or spatial shift \citep{van2024birds}.

\begin{figure*}[!t]
  \centering
  \includegraphics[width=\linewidth]{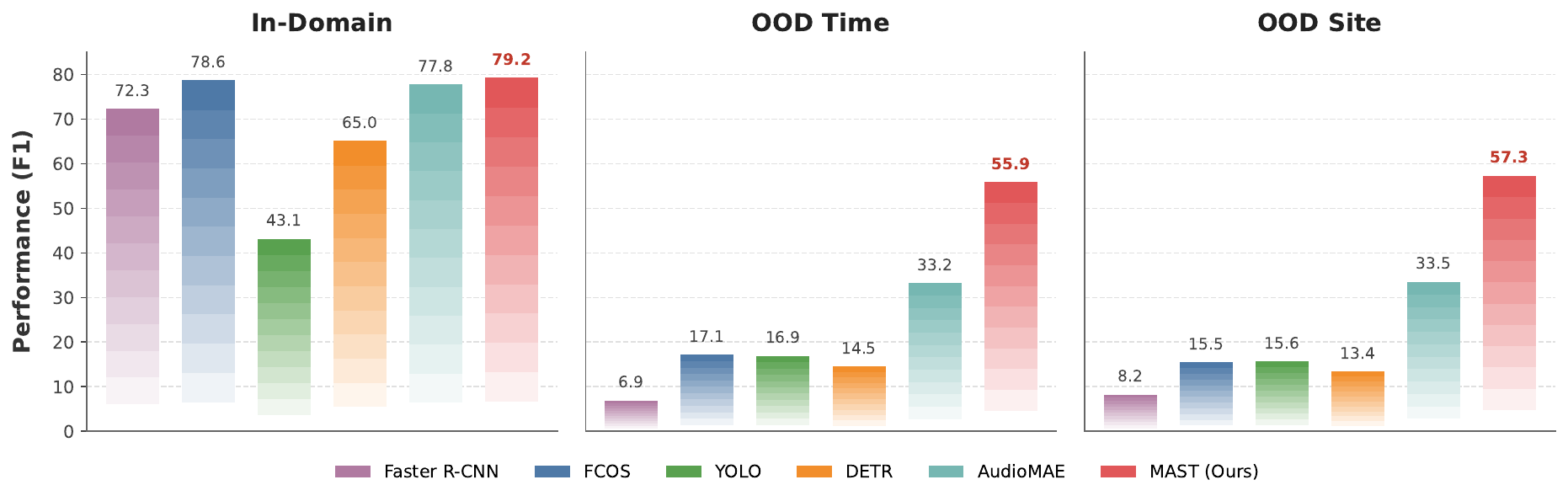}
  \vspace{-5pt}
  \caption{\textbf{Performance comparison on rainforest domain.} MAST achieves the best overall performance, with especially large improvements under temporal and cross-site distribution shift, showing stronger robustness than fully supervised baselines and generic SSL transfer.}
  \vspace{-15pt}
  \label{fig:performance}
\end{figure*}

To address these challenges, we propose MAST, a label-efficient framework for robust sound detection under limited supervision. We study a practical and stricter setting in which only a small amount of expert-labeled data, for example one fully annotated day from a single site, is available for training, while evaluation requires robust binary detection of ``any animal sound'' in terms of frequency and time across both (i) unseen days at the same site and (ii) unseen sites in the same region. Figure~\ref{fig:motivation} summarizes the motivation for MAST: rainforest soundscapes exhibit substantial temporal and cross-site distribution shift, while standard supervised detectors trained on narrow label coverage often generalize poorly. MAST addresses this by decoupling representation learning from detection. First, we adopt the ViT-based masked audio encoder architecture of AudioMAE \citep{AudioMAE} and pretrain it on large unlabeled rainforest audio via masked spectrogram reconstruction. We then transfer the pretrained encoder to a lightweight detector with audio-aware adapters and an FPN \citep{fpn} neck to recover locality and multi-scale structure. To improve separation between true events and confounding background, we add a box-level contrastive loss over region features. Finally, we exploit large unlabeled pools through iterative self-training with pseudo-labels and labeled-data refinement.

\begin{figure*}[!t]
  \centering
  \includegraphics[width=\linewidth]{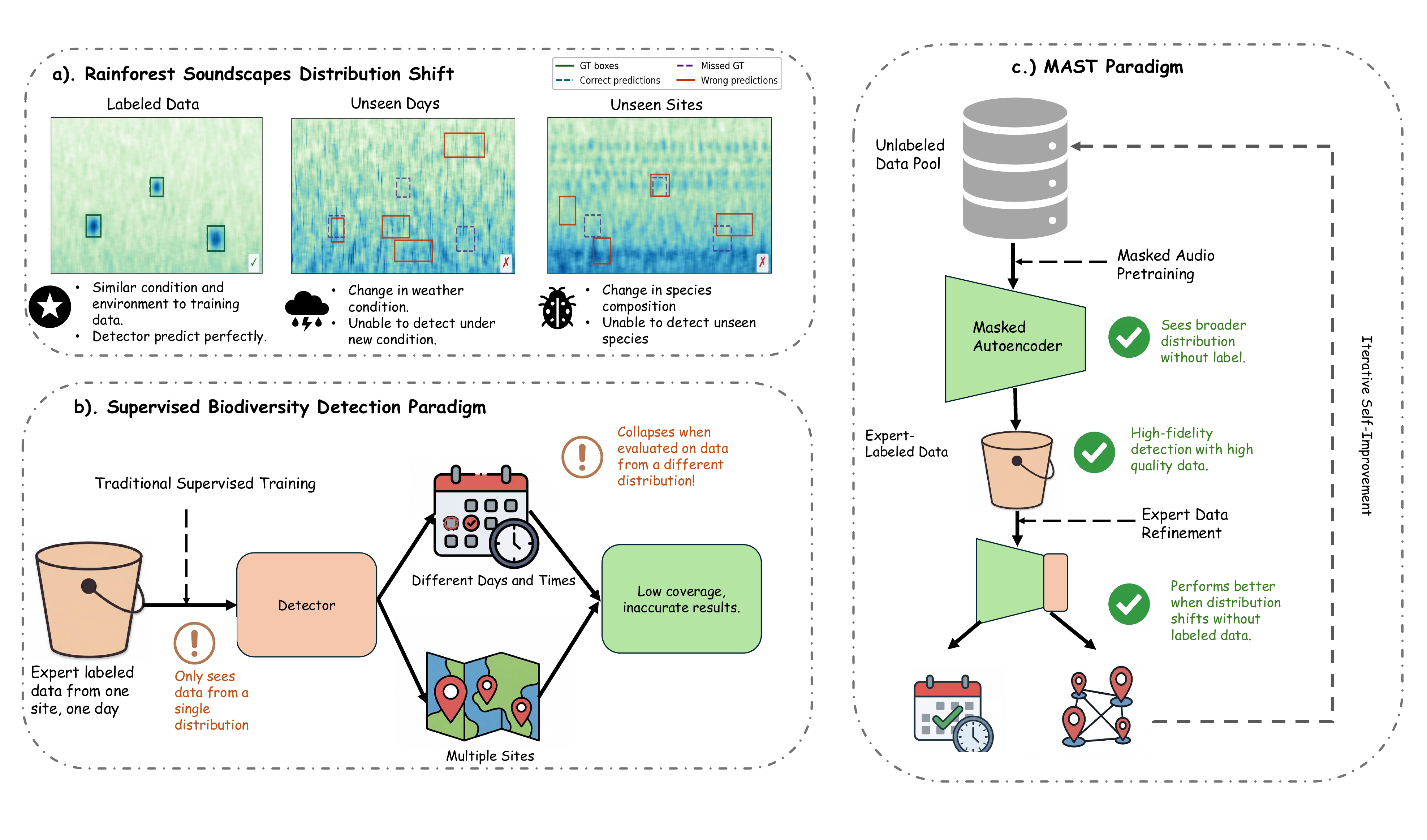}
   \vspace{-20pt}
\caption{\textbf{Motivation of MAST.} (a) Rainforest soundscapes exhibit substantial temporal and cross-site distribution shift, making detection under unseen conditions difficult. (b) Standard supervised biodiversity detectors trained on limited labels from a narrow time/site range often generalize poorly when deployed to new days and sites. (c) These challenges motivate a label-efficient and shift-robust framework that can leverage both limited expert labels and abundant unlabeled soundscapes.}
 \vspace{-15pt}
  \label{fig:motivation}
\end{figure*}

Time--frequency detection is ecologically more informative than clip-level tagging: it reveals which sounds overlap in time and frequency, how vocal activity is distributed across the spectrum, and can help determine whether detected events correspond to known or potentially novel taxa. This information is essential for biodiversity assessment but lost by coarser representations. Under comparable annotation budgets, MAST consistently improves robustness to temporal and site shift across two ecologically distinct domains: tropical rainforest soundscapes and Mediterranean bird vocalizations (Figure~\ref{fig:performance}). Our contributions are as follows:
\begin{itemize}
\item We introduce a practical pipeline for binary time--frequency sound detection that combines masked audio pretraining, adaptation, and lightweight detection under limited labels.
  \item We design a region-level contrastive loss that improves event/background separability and strengthens generalization under distribution shift.
  \item We show that an iterative self-training curriculum effectively leverages unlabeled data and improves detection in out-of-domain settings without additional expert annotation.
  \item We evaluate on two distinct bioacoustic domains, tropical rainforest soundscapes and European bird vocalizations, and provide an end-to-end recipe for converting large unlabeled recordings into robust detectors for biodiversity monitoring.
\end{itemize}

\section{Related Work}

\noindent \textbf{Passive acoustic monitoring and supervised sound detection.}
PAM enables long-duration, large-scale biodiversity recording \citep{Sueur2015-SUEETE-2,pam,CORD2025111042}, but downstream analysis remains bottlenecked by the need for manual annotation or by acoustic indices that need ground truthing \citep{kershenbaum2025automatic,bradfer2025acoustic,bradfer2023using}. Classical template-matching detectors scale poorly to diverse soundscapes \citep{Jahn2017AutomatedSoundRecognition,Towsey01062012,Katz03052016,Buxton2018AcousticIndices,barker2014automated,kershenbaum2025automatic}. Learned systems such as BirdNET \citep{birdnet} and Voxaboxen \citep{mahon2025robust} operate at clip level or along the time axis only, without full time--frequency detection. Casting the problem as object detection on spectrograms \citep{eventness,r-crnn,yoho,hamard2024deep} enables box-level localization but typically requires substantial supervision and remains brittle under site/day shift \citep{Stowell_2022,hamard2024deep,soundevent}.

\noindent \textbf{Self-supervised audio representations.}
SSL has advanced audio representation learning via masked modeling, contrastive learning, and predictive objectives \citep{wav2vec,hubert,AudioMAE,beats,niizumi2022masked,byol}. Recent bioacoustic foundation models such as Perch~2.0 \citep{van2025perch}, Bird-MAE \citep{rauch2025can}, NatureLM-audio \citep{naturelmaudio} transfer effectively to low-label monitoring tasks \citep{moummad2024self,xu2026sita}, but they operate at the clip level and do not predict time--frequency bounding boxes. Existing evaluations accordingly focus on tagging, classification, or retrieval \citep{HEAR,clap,rauch2024birdset,hagiwara2023beans}. Extending pretrained representations to dense box-level localization introduces challenges in boundary fidelity, foreground--background imbalance, and stability under non-stationary noise. Our work studies how pretrained features can be adapted for robust time--frequency detection in biodiversity soundscapes.

\noindent \textbf{Weak supervision and unlabeled soundscape learning.}
Prior SED research has explored weak supervision, semi-supervised learning, and pseudo-labeling \citep{Serizel2018,kong2020sound,wang2019comparison,park2021self,park2022cross,kim2023semi}. However, most of this literature focuses on temporal event activity rather than explicit time--frequency box detection, and stable learning from pseudo labels remains difficult under distribution shift \citep{Serizel2018,mahon2025robust}. Our method differs in both objective and setting: we use confidence-weighted pseudo-boxes within a staged self-training pipeline, where exploration on diverse pseudo-labeled soundscapes expands coverage under shift, followed by expert-guided refinement that re-anchors precision.

\begin{figure*}[!t]
  \centering
  \includegraphics[width=\linewidth]{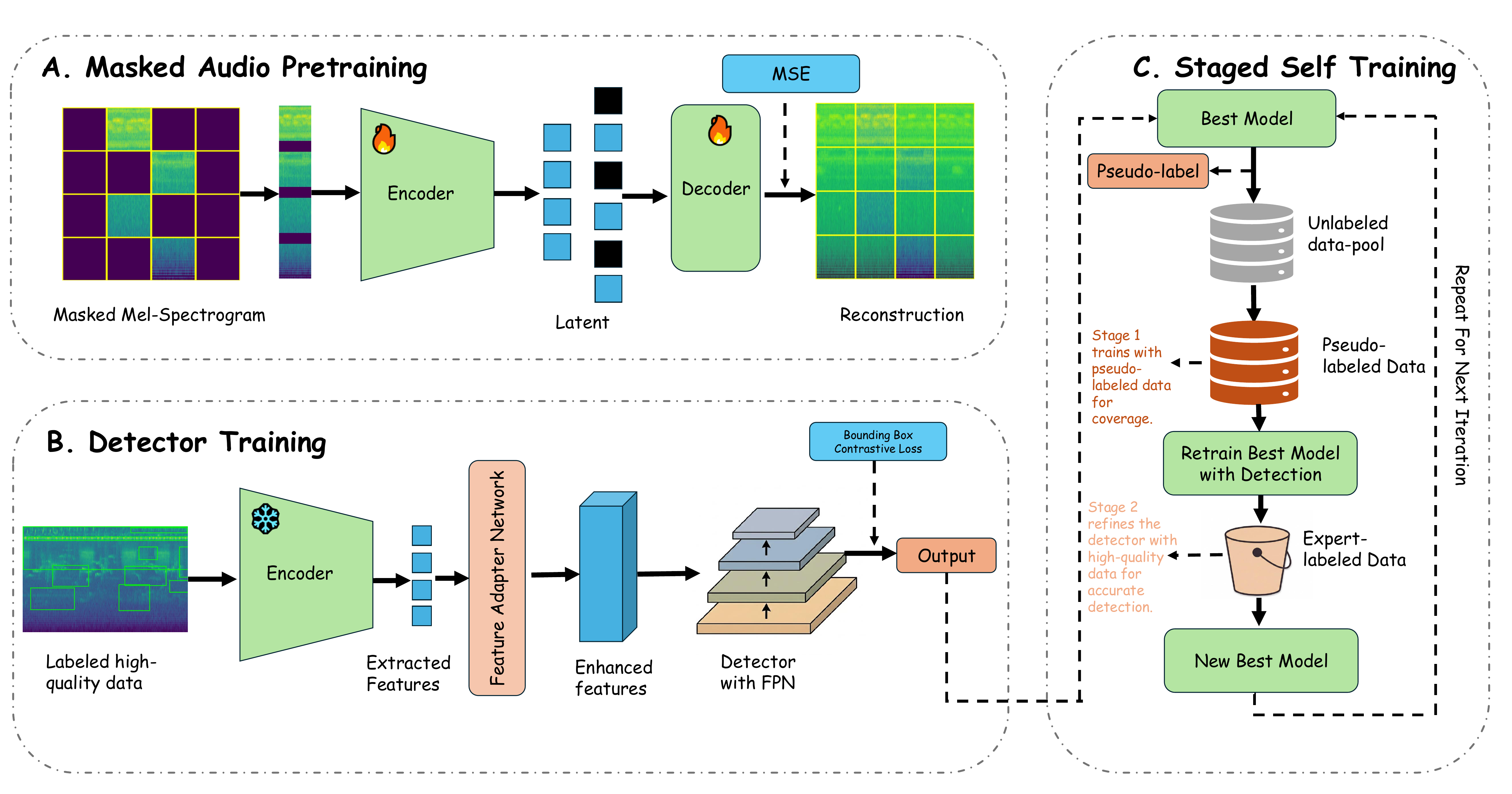}
  \vspace{-20pt}
\caption{\textbf{Overview of the MAST pipeline.} (A) Masked-audio pretraining learns general acoustic representations from unlabeled rainforest spectrograms via masked reconstruction. (B) Detector training transfers the pretrained encoder to a detection model with an audio-aware adapter and lightweight FPN, together with a box-level contrastive objective for event separation. (C) Staged self-training improves robustness under distribution shift by first generating pseudo-labels on unlabeled audio and then retraining with exploration on pseudo-labeled data followed by refinement on expert-labeled data.}
\vspace{-10pt}
  \label{fig:pipeline}
\end{figure*}

\section{Methods}
\label{sec:method}

Figure~\ref{fig:pipeline} provides an overview of MAST, which combines masked-audio pretraining, label-efficient detector adaptation, and self-training for robust sound detection under distribution shift.

\subsection{Problem Setup}

Given an audio waveform $x(t)$, we compute a log-mel spectrogram $S \in \mathbb{R}^{F \times T}$. Each clip may contain multiple annotated time--frequency boxes $B=\{(b_i, y_i)\}_{i=1}^N$, where each box $b_i=[t_i^{(1)}, t_i^{(2)}, f_i^{(1)}, f_i^{(2)}]$. Although the dataset includes sonotype labels $y_i$, our primary task is binary detection (``any animal sound''), so all annotated events are treated as a single foreground class. We consider a low-label regime with limited expert-labeled data $\mathcal{D}_{\ell}$ and abundant unlabeled data $\mathcal{D}_u$, and aim to achieve robust detection under temporal and spatial shift.

\subsection{Tasks and Objectives}

\noindent \textbf{Overview.} Our pipeline has three stages: (i) self-supervised representation learning on unlabeled soundscapes, (ii) label-efficient supervised time--frequency detection on expert boxes, and (iii) staged self-training to exploit additional unlabeled audio. See Fig.~\ref{fig:pipeline} for an overview of MAST.

\noindent \textbf{Self-supervised pretraining.} Given an unlabeled log-mel spectrogram chunk $S \sim \mathcal{D}_u$, we pretrain a masked audio encoder $E_\phi$ by masking a set of time-frequency patches $M$ and reconstructing the masked content with a lightweight decoder. We minimize the masked reconstruction loss

$$
\min _\phi \mathbb{E}_{S \sim \mathcal{D}_u}\left[\mathcal{L}_{\mathrm{MAE}}(S ; \phi)\right], \quad \mathcal{L}_{\mathrm{MAE}}=\frac{1}{|M|} \sum_{(f, t) \in M}\|\hat{S}[f, t]-S[f, t]\|_2^2 .
$$

\noindent \textbf{Supervised detection on expert labels.} For time-frequency detection, we reuse the pretrained encoder as the detector backbone and attach lightweight audio-aware adapters $A_\psi$ and a feature pyramid $N_\gamma$. A detection head $D_\theta$ predicts box coordinates and confidence scores. We train on the labeled set $\mathcal{D}_{\ell}$ with a standard detection objective:

$$
\mathcal{L}_{\mathrm{det}}=\lambda_{\mathrm{cls}} \mathcal{L}_{\mathrm{cls}}+\lambda_{\mathrm{box}} \mathcal{L}_{\mathrm{box}}+\lambda_{\mathrm{iou}} \mathcal{L}_{\mathrm{iou}},
$$

where $\mathcal{L}_{\text {cls }}$ is focal loss, $\mathcal{L}_{\text {box }}$ is $\ell_1$ regression on box corners, and $\mathcal{L}_{\text {iou }}$ is GIoU loss \citep{rezatofighi2019giou}. To improve robustness under background shift, we add a box-level contrastive term that encourages invariance across perturbations of the same event while separating events from hard background regions. For each ground-truth box we form positives via spatial/temporal jittering, and sample negatives from low-IoU proposals. Let $z_i$ be the normalized projected feature pooled from box $i$, the contrastive loss is

$$
\mathcal{L}_{\text {con }}=\frac{1}{N_{+}} \sum_i\left[-\log \frac{\sum_{p \in P_i} \exp \left(\left\langle z_i, p\right\rangle / \tau\right)}{\sum_{p \in P_i} \exp \left(\left\langle z_i, p\right\rangle / \tau\right)+\sum_{n \in N} \exp \left(\left\langle z_i, n\right\rangle / \tau\right)}\right] .
$$

The supervised training objective becomes

$$
\min _{\psi, \gamma, \theta, \phi_{\text {top }}} \mathbb{E}_{(S, B) \sim \mathcal{D}_{\ell}}\left[\mathcal{L}_{\text {det }}(S, B)+\lambda_{\text {con }} \mathcal{L}_{\text {con }}(S, B)\right],
$$

\noindent where $\phi_{\text{top}}$ denotes the parameters of the top $k$ encoder blocks that are unfrozen during fine-tuning.

\noindent \textbf{Self-training on unlabeled audio.} Starting from a seed detector trained on $\mathcal{D}_{\ell}$, we generate pseudo boxes on $\mathcal{D}_u$ using class-agnostic NMS and a confidence threshold $q$, yielding $\Pi_q\left(\mathcal{D}_u\right)$. We then retrain on the union:

$$
\min _{\psi, \gamma, \theta, \phi_{\text {top }}} \mathbb{E}_{(S, B) \sim \mathcal{D}_{\ell} \cup \Pi_q\left(\mathcal{D}_u\right)}\left[\mathcal{L}_{\text {det }}(S, B)+\lambda_{\text {con }} \mathcal{L}_{\text {con }}(S, B)\right] .
$$

To reduce confirmation bias, we apply confidence weighting that down-weights uncertain pseudo-labels and keep the backbone largely frozen in early self-training before optionally unfreezing top layers. See App.~\ref{app:self-training} for a formal definition of this stage.

\subsection{Detector Design}
\label{sec:detector}

We use the pretrained ViT-B/16 masked audio encoder $E_\phi$ as the backbone for time-frequency detection. Given a log-mel spectrogram chunk, the encoder outputs a token grid which we reshape into a feature map $E \in \mathbb{R}^{C \times H \times W}$. We then attach a lightweight audio-aware adapter $A_\psi$, a feature pyramid network $N_\gamma$, and a detection head $D_\theta$. For label-efficient transfer, we either freeze $E_\phi$ or fine-tune only the top $k$ transformer blocks, while training $A_\psi, N_\gamma$, and $D_\theta$. Our default head is an anchor-free FCOS predicting classification, centerness, and box regression. Full architectural details, adapter variants, and additional ablations are provided in App.~\ref{app:adapter},~\ref{app:architecture}, and~\ref{app:more_ablation}.

\noindent \textbf{Resolution bottleneck and asymmetric adapter.}
Transferring patch-based ViTs to dense time--frequency detection introduces a \emph{resolution bottleneck}: with patch size $p{=}16$ and spectrogram $T{\times}F = 1024{\times}128$, the encoder produces a $64{\times}8$ token grid with strides $s_t{=}s_f{=}16$. We show (Proposition~\ref{prop:iou_bound}, App.~\ref{app:adapter_theory}) that grid quantization can degrade IoU to at most $\delta_t \delta_f / [(\delta_t + s_t)(\delta_f + s_f)]$ in the worst case, and that narrow-band events with $\delta_f \leq p$ may fail to reach IoU${}\geq 0.5$ under worst-case alignment (Corollary~\ref{cor:narrow_band}). This bottleneck is asymmetric: the frequency axis has only 8 tokens, ${\sim}1$\,kHz each, while many vocalizations span $<1$\,kHz in bandwidth.

Our adapter resolves this via \emph{asymmetric learnable upsampling} ($u_t{=}2, u_f{=}4$), producing a $128{\times}32$ feature map with effective strides $8{\times}4$. The larger frequency factor addresses the more severe bottleneck. Upsampled features are refined by \emph{anisotropic} $k{\times}1$ and $1{\times}k$ convolutions that exploit the separable spectro-temporal structure of animal sounds. Empirically, this adapter improves cross-site OOD F1 from 0.348 to 0.447 and achieves the highest mean IoU (0.682) among all variants (Tables~\ref{tab:pipeline_ablation_ood},~\ref{tab:adapter_ablation_ood}). See App.~\ref{app:adapter_theory} for formal analysis with proofs.

\subsection{Self-training on Unlabeled Soundscapes}

Starting from a seed detector trained on $\mathcal{D}_{\ell}$, we generate pseudo boxes on unlabeled audio $\mathcal{D}_u$ by running the detector, applying class-agnostic NMS, and retaining predictions above a confidence threshold $q$, yielding a pseudo-labeled dataset $\Pi_q(\mathcal{D}_u)$. We then retrain with a two-stage curriculum: Stage~1 trains on $\Pi_q(\mathcal{D}_u)$ to broaden coverage across diverse backgrounds, and Stage~2 fine-tunes on $\mathcal{D}_{\ell}$ to re-anchor the model to clean expert boxes and calibrate false positives. In both stages we optimize the detection objective with our contrastive term. This cycle is repeated for 1--2 rounds. We show (Proposition~\ref{prop:dro}, App.~\ref{app:self_training_theory}) that this procedure progressively reduces the distributional gap to the target domain, and that confidence weighting provably reduces effective label noise (Proposition~\ref{prop:confidence_weighting}). See App.~\ref{app:self-training} for formal definitions and App.~\ref{app:self_training_theory} for theoretical analysis.

\section{Experiment Setup}

\noindent \textbf{Datasets and Splits.} We evaluate on two ecologically distinct domains to test cross-domain generality.

\noindent \textit{Rainforest domain.} We study sound detection on passive acoustic recordings from tropical rainforest in East Kalimantan, Indonesia, from 2017--2019. Our primary labeled dataset consists of 24 hours recorded on July 10, 2018 at one site. Audio is resampled to 16\,kHz and segmented into 10.24\,s clips; each clip is converted to a $128 \times 1024$ log-mel spectrogram (App.~\ref{app:processing}). Expert annotators provide time--frequency bounding boxes for animal sounds (vocalizations, stridulations, etc.) spanning hundreds of ``sonotypes'' - unique sound types. We collapse all sonotypes into a single foreground category and evaluate binary detection. We evaluate in two regimes: in-domain, where we do a 70/15/15 train/val/test split of the labeled site/day, and out-of-distribution (OOD), where recordings come from different days and different sites within the same landscape, introducing covariate shift in background conditions, species activity, and sensor characteristics. A large pool of unlabeled rainforest audio of $\sim$661.7\,h is used for masked audio pretraining and self-training. See App.~\ref{app:dataset_stats} for detailed statistics.

\noindent \textit{Bird domain.} To test cross-domain generality, we additionally evaluate on the BIRDeep dataset \citep{marquez2025birdeep}, which contains 641 recordings of bird vocalizations from 9 autonomous recording sites across 4 habitat types: low shrubland, high shrubland, ecotone, and marshland, in Do\~{n}ana National Park, Spain. Annotations cover 38 bird species with time--frequency bounding boxes; we again collapse to binary detection. For masked audio pretraining, we combine BIRDeep unlabeled audio with BirdSet XCM \citep{rauch2024birdset}, a large-scale bird audio collection of $\sim$900K spectrogram chunks. We construct in-domain, temporal-OOD, and site-OOD evaluation splits. See App.~\ref{app:dataset_stats} for dataset details.

\noindent \textbf{Baselines.} Recent work has applied object detectors to spectrograms for dense time--frequency detection in marine mammals \citep{hamard2024deep}, bird song \citep{eventness,yoho}, and general soundscapes \citep{soundevent}, but these rely on fully supervised detectors trained from scratch. We compare against representative architectures, all adapted to mel-spectrograms and trained on the same labeled set:
\begin{itemize}[leftmargin=13pt]
    \item \textbf{Faster R-CNN \citep{ren2016fasterrcnnrealtimeobject}}: A two-stage detector with a ResNet-50 backbone and FPN, adapted to log-mel spectrogram. We use class-agnostic detection with standard region proposal + RoIAlign.
    \item \textbf{FCOS \citep{fcos}}: An anchor-free one-stage detector with a ResNet-50 + FPN backbone, predicting per-location classification, centerness, and box regression on the spectrogram lattice.
    \item \textbf{YOLO26 \citep{yolo}}: A real-time one-stage detector. We use the YOLO26x architecture, adapted to log-mel spectrogram inputs.
    \item \textbf{DETR \citep{detr}}: A transformer-based end-to-end detector using set prediction with bipartite matching, adapted to spectrogram inputs. This baseline evaluates whether a query-based detection paradigm transfers to detection in low-label soundscapes.
    \item \textbf{AudioMAE \citep{AudioMAE}}: A transfer-learning baseline that initializes the backbone with a publicly available AudioMAE checkpoint pretrained on AudioSet, then fine-tunes the detector on our labeled split. This isolates the effect of generic large-scale audio pretraining versus in-domain self-supervised pretraining used in our method.
\end{itemize}

\begin{table}[tb!]
\caption{\textbf{In-domain detection performance} on two domains: Rainforest (tropical soundscapes) and BIRDeep (European bird vocalizations \citep{marquez2025birdeep}). \textit{Higher is better.}}
\label{tab:in_domain_results}
\centering
\small
\renewcommand{\arraystretch}{1.06}
\resizebox{0.99\linewidth}{!}{
\begin{tabular}{@{}l|ccccc|ccccc@{}}
\toprule
& \multicolumn{5}{c|}{\textbf{Rainforest}} & \multicolumn{5}{c}{\textbf{BIRDeep}} \\
Method & Prec. & Rec. & F1 & mAP & mIoU & Prec. & Rec. & F1 & mAP & mIoU \\
\midrule
Faster R-CNN \citep{ren2016fasterrcnnrealtimeobject} & 0.6832 & \textbf{0.8390} & 0.7226 & \textbf{0.7715} & 0.8590 & 0.3716 & 0.2061 & 0.2651 & 0.1537 & 0.6587 \\
FCOS \citep{fcos} & \textbf{0.9708} & 0.6601 & 0.7858 & 0.6334 & 0.9290 & 0.7571 & 0.8091 & \textbf{0.7822} & 0.6650 & 0.6849 \\
YOLO26 \citep{yolo} & 0.8659 & 0.3306 & 0.4309 & 0.3475 & 0.8102 & 0.7897 & 0.7737 & 0.7816 & 0.6183 & 0.7181 \\
DETR \citep{detr} & 0.5905 & 0.7839 & 0.6501 & 0.7422 & 0.8400 & 0.3256 & 0.7525 & 0.4545 & 0.3723 & 0.6024 \\
AudioMAE \citep{AudioMAE} & 0.9454 & 0.6903 & 0.7775 & 0.6736 & 0.9352 & \textbf{0.7951} & 0.6859 & 0.7364 & 0.6025 & 0.6756 \\
\midrule
\rowcolor{cyan!5}\textbf{MAST (ours)} & 0.9535 & 0.7021 & \textbf{0.7916} & 0.6853 & 0.9387 & 0.7824 & 0.7010 & 0.7395 & 0.6011 & 0.6822 \\
\rowcolor{cyan!5}\textbf{MAST + ST (ours)} & 0.9548 & 0.6277 & 0.7338 & 0.6277 & \textbf{0.9457} & 0.6709 & \textbf{0.9061} & 0.7709 & \textbf{0.7783} & \textbf{0.7339} \\
\bottomrule
\end{tabular}
}
\vspace{-15pt}
\end{table}

\noindent \textbf{Training Details.} All models use the same log-mel spectrogram processing and dataset-wide normalization. We train with AdamW optimizer and a warmup--cosine learning-rate schedule. For MAST detectors, we use component-wise learning rates (smaller for the backbone, larger for adapters/FPN/head) and unfreeze only the top encoder blocks during fine-tuning. We also apply lightweight spectrogram augmentations. Unless stated otherwise, we match training budgets and augmentation pipelines across methods. See App.~\ref{app:hyperparam},~\ref{app:data_augmentations} for details.

\noindent \textbf{Evaluation.} We evaluate binary time--frequency detection of ``any animal sound,'' reporting precision, recall, F1, mean average precision (mAP), and mean intersection-over-union (mIoU) at IoU $\geq 0.5$. We evaluate in two regimes: in-domain with held-out chunks from the same site/date and OOD on different days or sites. Confidence thresholds are tuned on the in-domain validation split. See App.~\ref{app:eval} for full details.

\section{Results}

\noindent \textbf{In-domain performance.}
Table~\ref{tab:in_domain_results} reports in-domain results. On rainforest, MAST achieves the best F1 of 0.792 and mean IoU of 0.939, demonstrating that self-supervised pretraining and audio-aware adaptation do not sacrifice in-domain accuracy despite being designed primarily for robustness. On BIRDeep, FCOS and YOLO26 lead in F1 at 0.782, while MAST reaches 0.740. Self-training closes this gap: MAST+ST reaches 0.771 F1 and surpasses all baselines in mAP at 0.778 vs.\ 0.665 for FCOS, in recall at 0.906, and in mean IoU at 0.734. Crucially, the remaining in-domain difference inverts under distribution shift, where MAST provides the largest gains (Table~\ref{tab:ood_combined}), highlighting that in-domain metrics alone are insufficient for field deployment.

\begin{table}[tb!]
\caption{\textbf{Out-of-distribution robustness} on Rainforest and BIRDeep. Temporal OOD evaluates on recordings from different dates at the same site. Cross-site OOD evaluates on a geographically distinct recording location. \textit{Higher is better}.}
\label{tab:ood_combined}
\centering
\small
\renewcommand{\arraystretch}{1.06}
\vspace{1pt}
\resizebox{0.99\linewidth}{!}{
\begin{tabular}{@{}l|ccccc|ccccc@{}}
\toprule
& \multicolumn{5}{c|}{\textbf{Rainforest}} & \multicolumn{5}{c}{\textbf{BIRDeep}} \\
Method & Prec. & Rec. & F1 & mAP & mIoU & Prec. & Rec. & F1 & mAP & mIoU \\
\midrule
\multicolumn{11}{c}{\textit{Temporal OOD}} \\
\midrule
Faster R-CNN \citep{ren2016fasterrcnnrealtimeobject} & 0.0400 & 0.2479 & 0.0689 & 0.0168 & 0.4259 & 0.3199 & 0.1020 & 0.1547 & 0.0909 & 0.5559 \\
FCOS \citep{fcos} & 0.5072 & 0.1029 & 0.1711 & 0.1145 & 0.6429 & 0.7269 & 0.2885 & 0.4130 & 0.2116 & 0.6510 \\
YOLO26 \citep{yolo} & \textbf{0.9459} & 0.0926 & 0.1688 & 0.0909 & \textbf{0.7603} & 0.7559 & 0.2040 & 0.3212 & 0.1520 & \textbf{0.6767} \\
DETR \citep{detr} & 0.1463 & 0.1800 & 0.1450 & 0.1316 & 0.5237 & 0.1923 & \textbf{0.6521} & 0.2971 & 0.3585 & 0.5528 \\
AudioMAE \citep{AudioMAE} & 0.8178 & 0.2085 & 0.3323 & 0.1636 & 0.6382 & \textbf{0.7959} & 0.3660 & 0.5014 & 0.2983 & 0.6328 \\
\midrule
\rowcolor{cyan!5}\textbf{MAST (ours)} & 0.7810 & 0.2968 & 0.4301 & 0.2577 & 0.6859 & 0.7618 & 0.4808 & 0.5895 & 0.3838 & 0.6083 \\
\rowcolor{cyan!5}\textbf{MAST + ST (ours)} & 0.7120 & \textbf{0.4603} & \textbf{0.5591} & \textbf{0.3340} & 0.6292 & 0.6755 & 0.6247 & \textbf{0.6491} & \textbf{0.4280} & 0.5903 \\
\midrule
\multicolumn{11}{c}{\textit{Cross-site OOD}} \\
\midrule
Faster R-CNN \citep{ren2016fasterrcnnrealtimeobject} & 0.0497 & 0.2341 & 0.0820 & 0.0281 & 0.4223 & 0.8824 & 0.0744 & 0.1372 & 0.0909 & 0.5863 \\
FCOS \citep{fcos} & 0.5848 & 0.0892 & 0.1548 & 0.0909 & 0.6513 & \textbf{0.9725} & 0.2852 & 0.4410 & 0.2592 & 0.6566 \\
YOLO26 \citep{yolo} & \textbf{0.9910} & 0.0849 & 0.1564 & 0.0909 & 0.7484 & 0.8889 & 0.1934 & 0.3177 & 0.1675 & \textbf{0.7010} \\
DETR \citep{detr} & 0.1269 & 0.1758 & 0.1338 & 0.1130 & 0.5607 & 0.1334 & \textbf{0.6621} & 0.2220 & 0.3032 & 0.5554 \\
AudioMAE \citep{AudioMAE} & 0.8028 & 0.2113 & 0.3346 & 0.1742 & 0.6677 & 0.7266 & 0.4284 & 0.5390 & 0.3217 & 0.6072 \\
\midrule
\rowcolor{cyan!5}\textbf{MAST (ours)} & 0.7562 & 0.3169 & 0.4466 & 0.2573 & \textbf{0.6824} & 0.6476 & 0.5059 & 0.5680 & 0.3803 & 0.6001 \\
\rowcolor{cyan!5}\textbf{MAST + ST (ours)} & 0.6978 & \textbf{0.4838} & \textbf{0.5715} & \textbf{0.3921} & 0.6420 & 0.6522 & 0.6438 & \textbf{0.6434} & \textbf{0.4414} & 0.5943 \\
\bottomrule
\end{tabular}
}
\vspace{-15pt}
\end{table}

\noindent \textbf{Out-of-distribution robustness.}
Table~\ref{tab:ood_combined} reports OOD results under temporal and cross-site shift. Across both domains, detectors trained from scratch degrade sharply. For example, FCOS drops from 0.786 to 0.155 cross-site F1 on rainforest, a $5{\times}$ collapse. In contrast, MAST achieves the strongest overall F1 and mAP. On the rainforest domain, MAST raises cross-site F1 from 0.335 to 0.447 and temporal F1 from 0.332 to 0.430 over the strongest transfer baseline AudioMAE. On BIRDeep, MAST improves temporal F1 from 0.501 to 0.590 and cross-site F1 from 0.539 to 0.568 over AudioMAE, confirming that domain-matched pretraining is more transferable than generic AudioSet initialization. Notably, most baselines exhibit a sharp precision--recall imbalance under shift: FCOS and YOLO26 retain high precision but collapse in recall, while DETR and Faster~R-CNN maintain recall at the cost of precision. MAST is the only method that sustains both, yielding the highest F1 across conditions. The consistent pattern across both ecologically distinct domains, with modest in-domain gaps but substantial OOD improvements, confirms that MAST's primary benefit is robustness to distribution shift, the key deployment challenge in real-world biodiversity monitoring. Self-training further boosts rainforest OOD F1 to 0.559 temporal and 0.572 cross-site, and BIRDeep OOD F1 to 0.649 temporal and 0.643 cross-site.

\begin{figure}[t]
\centering
\includegraphics[width=\linewidth]{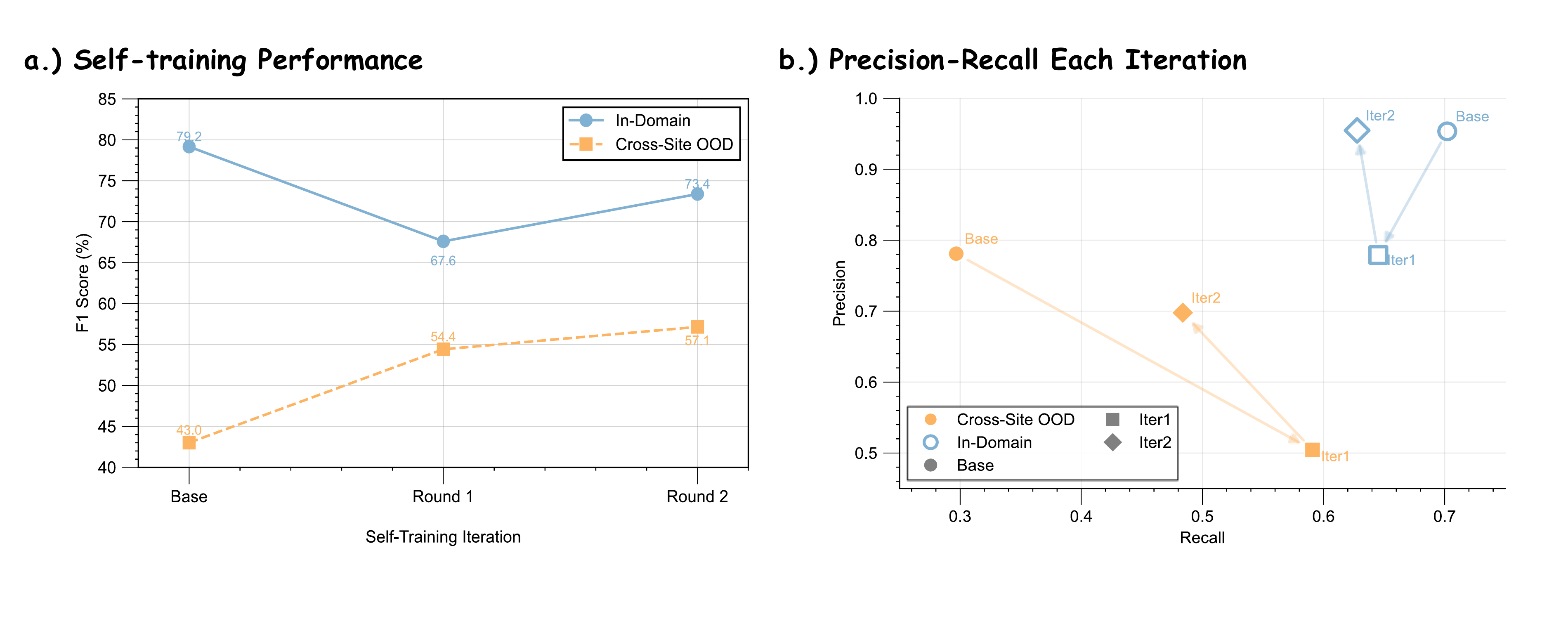}
\vspace{-20pt}
\caption{Effect of iterative self-training on F1 for rainforest domain. Round~0 is the best MAST detector after expert label refinement. Rounds~1--2 are models trained with the two-stage curriculum. OOD F1 improves monotonically from 0.4466 to 0.5442 and 0.5715, while in-domain F1 drops after Round~1 from 0.7916 to 0.6760 and partially recovers after Round~2 to 0.7338.}
\vspace{-16pt}
\label{fig:self-training}
\end{figure}

\noindent \textbf{Effect of self-training.}
Figure~\ref{fig:self-training} shows that OOD F1 improves monotonically from 0.447 at Round~0 to 0.544 after one self-training round and to 0.572 after two rounds, a +0.13 absolute gain, while in-domain F1 dips after Round~1 and partially recovers after Round~2. The in-domain dip is expected: pseudo-label training broadens the decision boundary to cover unseen acoustic conditions, temporarily trading in-domain precision for OOD coverage, while the Stage~2 expert refinement partially corrects this drift. The first round provides the largest jump, suggesting that pseudo-labeling mainly helps by exposing the detector to diverse background conditions absent from the labeled site/day. The monotonic OOD trend, achieved without any additional expert annotation, supports self-training as a practical mechanism for improving cross-condition coverage. On BIRDeep, self-training similarly improves OOD F1 from 0.568 to 0.643 cross-site and from 0.590 to 0.649 temporal, with especially large mAP gains from 0.380 to 0.441 cross-site, confirming that the two-stage curriculum generalizes across domains. Per-round results are in App.~\ref{app:birdeep_st_rounds}. The consistent gains on both domains suggest that further rounds or larger unlabeled pools could yield additional improvements.

\begin{table}[tb!]
\caption{\textbf{Pipeline ablation on cross-site OOD detection on rainforest domain.}
Starting from the full MAST model, we progressively remove components to measure their individual contributions. We also report an AudioSet-pretrained variant and the FCOS baseline for reference. \textit{Higher is better}.}
\label{tab:pipeline_ablation_ood}
\centering
\small
\renewcommand{\arraystretch}{1.08}
\resizebox{0.99\linewidth}{!}{
\begin{tabular}{@{}l|ccccc@{}}
\toprule
\textbf{Variant} & \textbf{Precision} & \textbf{Recall} & \textbf{F1} & \textbf{mAP} & \textbf{Mean IoU} \\
\midrule
\rowcolor{cyan!5}\textbf{MAST (full)} & 0.7562 & 0.3169 & \textbf{0.4466} & \textbf{0.2573} & \textbf{0.6824} \\
$-$\; Contrastive loss & 0.6536 & \textbf{0.3327} & 0.4410 & 0.2434 & 0.6555 \\
$-$\; Audio-aware adapter & 0.6380 & 0.2397 & 0.3484 & 0.2237 & 0.6607 \\
\;\; Replace with AudioSet pretraining & \textbf{0.8028} & 0.2113 & 0.3346 & 0.1742 & 0.6677 \\
$-$\; All pretraining (from scratch) & 0.3607 & 0.3666 & 0.3636 & 0.1840 & 0.5588 \\
\midrule
FCOS baseline & 0.5848 & 0.0892 & 0.1548 & 0.0909 & 0.6513 \\
\bottomrule
\end{tabular}
}
\vspace{-15pt}
\end{table}

\noindent \textbf{Ablation studies.}
Table~\ref{tab:pipeline_ablation_ood} decomposes MAST's gains under cross-site OOD shift by progressively removing components from the full model before self-training. Removing the contrastive loss drops precision sharply (0.756$\to$0.654) while recall rises slightly, leaving F1 nearly unchanged; the loss thus acts primarily as a confidence calibrator that sharpens the decision boundary, producing higher-quality pseudo-labels for self-training. Removing the adapter further reduces F1 from 0.441 to 0.348, demonstrating the importance of resolving the resolution bottleneck (Sec.~\ref{sec:detector}). Replacing domain-matched pretraining with generic AudioSet pretraining yields lower mAP, 0.174 vs.\ 0.224, confirming the value of in-domain representations. Removing all pretraining entirely still outperforms the CNN baseline in F1, 0.364 vs.\ 0.155, suggesting that the ViT's self-attention can model long-range spectro-temporal dependencies that aid generalization under background shift, even without pretrained weights. Each component contributes a meaningful and complementary gain; see App.~\ref{app:more_ablation} for additional adapter comparisons.

Table~\ref{tab:self_training_ablation} ablates the self-training curriculum. Stage~1 or Stage~2 alone both collapse recall despite high precision, yielding poor F1, demonstrating that naive single-stage pseudo-label training is insufficient. A one-stage union achieves high recall and mAP but precision drops to 0.157, indicating that mixing noisy pseudo supervision with expert labels without calibration leads to degenerate behavior. Only the full two-stage curriculum achieves the best F1 of 0.544 by balancing coverage and precision: Stage~1 broadens coverage under shift, while Stage~2 re-anchors the detector to clean expert annotations. This validates the curriculum design as essential rather than incidental to self-training success.

\begin{table}[tb!]
\caption{\textbf{Ablation of the self-training curriculum on cross-site OOD detection on rainforest domain.}
We compare the MAST model against Stage~1-only, Stage~2-only, a one-stage union of expert-labeled and pseudo-labeled data, and the full two-stage curriculum. \textit{Higher is better}.}
\label{tab:self_training_ablation}
\centering
\footnotesize
\renewcommand{\arraystretch}{1.05}
\begin{tabular}{@{}lccccc@{}}
\toprule
\textbf{Variant} & \textbf{Precision} & \textbf{Recall} & \textbf{F1} & \textbf{mAP} & \textbf{Mean IoU} \\
\midrule
Base (no ST) & 0.7562 & 0.3169 & 0.4466 & 0.2573 & 0.6824 \\
Stage 1 only & 0.9714 & 0.1201 & 0.2138 & 0.1731 & 0.7439 \\
Stage 2 only & \textbf{0.9971} & 0.1046 & 0.1893 & 0.1789 & \textbf{0.7644} \\
One stage union  & 0.1572 & \textbf{0.7341} & 0.2590 & \textbf{0.3894} & 0.5525 \\
\midrule
\rowcolor{cyan!5}\textbf{Stage 1 + Stage 2 (Full)} & 0.5044 & 0.5909 & \textbf{0.5442} & 0.3864 & 0.6067 \\
\bottomrule
\end{tabular}
\vspace{-15pt}
\end{table}

\noindent \textbf{Retrieval-based event characterization.}
A practical benefit of MAST's design is that the pretrained encoder naturally provides meaningful box-level embeddings: each detected event can be represented by pooling the corresponding time--frequency region from the feature map. This enables a retrieval-based workflow where detected sounds are matched to annotated examples via nearest-neighbor lookup for zero-shot classification, or grouped by acoustic similarity to discover candidate novel sounds for expert review. On oracle ground-truth boxes, retrieval achieves 72.3\% Top-1 accuracy over 131 sonotypes; on high-confidence predicted boxes with IoU~$\geq 0.5$, Top-1 reaches 96.1\%, indicating that detections align with acoustically prototypical events. This allows ecologists to perform species-level analysis directly on detector outputs without retraining a multi-class model. Full results are in App.~\ref{app:box_embedding_application}.

\noindent \textbf{Label budget and sampling strategy.}
A practical question for bioacousticians deploying a new detector is: \emph{how much annotation effort is needed, and how should it be allocated?} We study this by training the FCOS baseline on varying numbers of labeled chunks under three sampling strategies. Both random and time-balanced sampling substantially outperform sequential labeling, especially at low budgets where the gap reaches +0.13 F1 at $N{=}1000$ chunks, because sequential annotation under-represents acoustic conditions outside the labeled time window. Time-balanced sampling performs slightly better than random at most budgets, and as the budget grows all strategies converge. Since sequential labeling is the most natural default, where annotators typically start at the beginning of a recording and label forward in time, these results serve as a practical warning: practitioners should prioritize temporal diversity over volume when annotation resources are limited. Full results and protocol details are in App.~\ref{app:label_budget}.

\section{Conclusion}

We introduced MAST, a label-efficient framework for biodiversity sound detection that combines masked-audio pretraining, audio-aware detection adaptation, and iterative self-training. Using expert labels from only a single annotated day at one site, MAST substantially outperforms supervised baselines and generic AudioSet transfer under both temporal and cross-site distribution shift across two ecologically distinct domains, tropical rainforest soundscapes and Mediterranean bird vocalizations. Domain-matched self-supervised pretraining proves more transferable than large-scale generic pretraining, the audio-aware adapter resolves a critical resolution bottleneck for narrow-band vocalizations, and the two-stage self-training curriculum yields monotonic OOD gains without additional expert annotation. Together, these results demonstrate a scalable path from large unlabeled acoustic archives to robust time--frequency detectors under realistic label-limited conditions, supporting biodiversity assessment tasks currently bottlenecked by manual analysis, such as tracking changes in vocal activity across habitats, flagging acoustically novel events for taxonomic review, and scaling soundscape monitoring to landscape-level surveys. Limitations and future directions are discussed in App.~\ref{app:limitations}.

\newpage

\bibliographystyle{plainnat}
\bibliography{neurips_2026}

\newpage
\appendix

\section{Limitations and Future Work}
\label{app:limitations}

\noindent \textbf{Limitations.}
MAST performs binary ``any animal sound'' detection and does not distinguish among species or call types, though the retrieval workflow (App.~\ref{app:box_embedding_application}) partially bridges this gap. Self-training is bounded by the seed detector's quality: systematic false negatives cannot be recovered through pseudo-labeling alone. The pseudo-label confidence threshold is tuned on the in-domain validation split, which may not be optimal under OOD conditions; adaptive or domain-aware thresholding could further improve self-training. Finally, the full pipeline requires substantially more compute than a single supervised detector.

\noindent \textbf{Future work.}
Extending MAST to multi-class time--frequency detection via species-level annotations or unsupervised clustering of detected events is a natural next step. Incorporating complementary weak signals such as clip-level tags or acoustic indices could help break the pseudo-label quality ceiling. We evaluate on two ecologically distinct domains, but validation on additional biomes such as marine and urban and taxa like marine mammals and amphibians would further establish generality.

\section{Self-training Operator and Two-Stage Curriculum}
\label{app:self-training}

Given an unlabeled chunk $S$, a detector produces a set of scored candidate boxes

$$
\widehat{\mathcal{B}}(S)=\left\{\left(\hat{b}_j, s_j\right)\right\}_{j=1}^{\widehat{N}}, \quad s_j \in[0,1] .
$$

We apply class-agnostic NMS with overlap threshold $\eta$, and enforce a validity predicate $v(\hat{b})$, such as minimum time span, minimum frequency span, and boundary constraints. With a confidence policy $q(\cdot)$, the pseudo-label operator is

$$
\Pi_q(S)=\left\{\hat{b} \in \operatorname{NMS}_\eta(\widehat{\mathcal{B}}(S)): s(\hat{b}) \geq q(S) \wedge v(\hat{b})\right\} .
$$

Applying $\Pi_q$ to the unlabeled pool yields a pseudo-labeled dataset

$$
\Pi_q\left(\mathcal{D}_u\right)=\left\{(S, \tilde{B}): S \in \mathcal{D}_u, \tilde{B}=\Pi_q(S)\right\}.
$$

 In our experiments we set $q$ on the labeled validation split to balance precision and coverage. Remaining noise is suppressed by confidence weighting during training (see Stage~1 below).

We then train student detectors using two stages:

\noindent \textbf{Stage 1: pseudo pretraining.} We train on $\Pi_q\left(\mathcal{D}_u\right)$ using the detection loss with the box-level contrastive loss. To reduce the impact of noisy pseudo labels, we weight each pseudo-labeled example by its average pseudo confidence:

$$
\bar{g}(S)=\frac{1}{|\tilde{B}|} \sum_{\hat{b} \in \tilde{B}} s(\hat{b}),
$$

and optimize

$$
\mathcal{L}_{\text {stage1 }}=\mathbb{E}_{(S, \tilde{B}) \sim \Pi_q\left(\mathcal{D}_u\right)}\left[\bar{g}(S) \mathcal{L}_{\text {det }}(S, \tilde{B})+\lambda_{\text {con }} \mathcal{L}_{\text {con }}(S, \tilde{B})\right]
$$

Intuitively, Stage 1 increases recall and robustness by exposing the detector to diverse acoustic conditions that are absent from $\mathcal{D}_{\ell}$, while confidence weighting reduces confirmation bias from uncertain predictions.

\noindent \textbf{Stage 2: expert refinement.}
We initialize from the Stage 1 student and fine-tune on the expert-labeled set:

$$
\mathcal{L}_{\text {stage2 }}=\mathbb{E}_{(S, B) \sim \mathcal{D}_{\ell}}\left[\mathcal{L}_{\text {det }}(S, B)+\lambda_{\text {con }} \mathcal{L}_{\text {con }}(S, B)\right] .
$$

Stage 2 re-anchors training to high-quality annotations, mitigating confirmation bias from pseudo labels and improving calibration under domain shift.

We then repeat the cycle of pseudo-labeling $\to$ Stage 1 $\to$ Stage 2 for 1--2 rounds depending on unlabeled pool size and observed saturation.

\section{Theoretical Analysis of Adapter Design}
\label{app:adapter_theory}

We formalize well-known intuitions about the resolution bottleneck induced by patch-based vision transformers \citep{li2022vitdet,fpn}, derive the optimality of asymmetric upsampling for our setting, and analyze the efficiency of anisotropic convolutions.

\subsection{Resolution Bottleneck}

With patch size $p$ and spectrogram dimensions $T \times F$, the encoder produces a feature map with strides $s_t = p$ and $s_f = p$, yielding a token grid of size $(T/p) \times (F/p)$. A detector operating on this grid must express box boundaries at the granularity of these strides. Prior work has noted that coarse feature-map strides hurt small-object localization \citep{fpn,li2022vitdet}, motivating multi-scale feature pyramids. Here we make this intuition precise for the spectrogram setting and show that quantization can severely degrade localization quality, particularly along the frequency axis.

\begin{proposition}[Quantization-induced IoU degradation]
\label{prop:iou_bound}
Let $b^* = [t_1, f_1, t_2, f_2]$ be a ground-truth box with temporal extent $\delta_t = t_2 - t_1$ and frequency extent $\delta_f = f_2 - f_1$. Let $\hat{b}$ denote the tightest enclosing box whose corners lie on a grid with strides $s_t$ and $s_f$. Then there exists a placement of $b^*$ relative to the grid such that:
$$
\mathrm{IoU}(b^*, \hat{b}) \leq \frac{\delta_t \cdot \delta_f}{(\delta_t + s_t)(\delta_f + s_f)}.
$$
\end{proposition}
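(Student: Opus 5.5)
Since $\hat b$ is the tightest enclosing grid-aligned box, it contains $b^*$. The intersection is therefore $b^*$ itself, and the union is $\hat b$. So
$$
\mathrm{IoU}(b^*,\hat b)=\frac{\delta_t\,\delta_f}{|\hat b_t|\,|\hat b_f|},
$$
where $|\hat b_t|$ and $|\hat b_f|$ are the temporal and frequency extents of $\hat b$. The problem separates into two independent one-dimensional problems. In each axis it suffices to exhibit a placement of the interval whose enclosing grid interval has length at least $\delta+s$. The proof then multiplies the two axes, using the fact that placements in $t$ and $f$ can be chosen independently.

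For the one-dimensional step, take an interval $[a,a+\delta]$ and a grid $s\mathbb{Z}$. The tightest enclosing grid interval is $[s\lfloor a/s\rfloor,\; s\lceil (a+\delta)/s\rceil]$, with length
$$
L(a)=s\bigl(\lceil (a+\delta)/s\rceil-\lfloor a/s\rfloor\bigr).
$$
Choose $a$ slightly above a grid point, say $a=\epsilon$ with $0<\epsilon<s$. Then $\lfloor a/s\rfloor=0$ and $L=s\lceil(\delta+\epsilon)/s\rceil$. We need $L\ge \delta+s$, i.e. $\lceil(\delta+\epsilon)/s\rceil\ge \delta/s+1$.

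Write $\delta=ms+r$ with $0\le r<s$.
\begin{itemize}
\item If $r>0$, pick $\epsilon\in(s-r,s)$. Then $\delta+\epsilon>(m+1)s$, so the ceiling is at least $m+2$. Hence $L\ge (m+2)s>\delta+s$.
\item If $r=0$, any $\epsilon\in(0,s)$ gives ceiling $m+1$. Hence $L=(m+1)s=\delta+s$ exactly.
\end{itemize}
In both cases $L\ge\delta+s$, which is the worst-case misalignment (the interval straddles one extra cell boundary at each end). Applying this to both axes gives $|\hat b_t|\ge\delta_t+s_t$ and $|\hat b_f|\ge\delta_f+s_f$, which yields the stated bound.

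The main obstacle is only the case analysis on the remainder $r$. Naively, one might think choosing $a$ just past a grid point always adds a full stride $s$. In fact the right endpoint's rounding interacts with $r$, so $\epsilon$ must be tuned, e.g. $\epsilon\in(\max(0,s-r),s)$, to force the extra cell. A secondary point is the convention for $\hat b$: if the paper's grid has an offset or finite extent, this changes nothing, since we may translate the box and the spectrogram is large relative to $s$. We also assume the box lies in the interior, away from the spectrogram edges where clipping could shrink $\hat b$.
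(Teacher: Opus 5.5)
Your proof is correct and follows essentially the same route as the paper: since $b^*\subseteq\hat b$, the IoU is a ratio of areas, and you then handle each axis separately with an explicit misaligned placement and a case split on the remainder in $\delta = ms + r$. Your choice $a=\epsilon\in(s-r,s)$ for $r>0$ is the paper's placement of the left edge just below a grid point, offset by less than $r$; it gives an enclosing extent of $(m+2)s > \delta + s$, and when $r=0$ both arguments give exactly $\delta+s$.
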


\begin{proof}
We construct a worst-case placement as follows. Choose $t_1 = (m{+}1)s_t - \epsilon$ for integer $m$ and small $\epsilon > 0$, so $\lfloor t_1/s_t \rfloor = m$. Write $\delta_t = k_t s_t + r_t$ with $0 \leq r_t < s_t$. If $r_t = 0$, then $t_2 = (m{+}k_t{+}1)s_t - \epsilon$, so $\lceil t_2/s_t \rceil = m + k_t + 1$ and $\hat{\delta}_t = (k_t{+}1)s_t = \delta_t + s_t$. If $r_t > 0$ (choosing $\epsilon < r_t$), then $t_2 = (m{+}k_t{+}1)s_t + r_t - \epsilon$ with $r_t - \epsilon \in (0, s_t)$, so $\lceil t_2/s_t \rceil = m + k_t + 2$ and $\hat{\delta}_t = (k_t{+}2)s_t = \delta_t + 2s_t - r_t > \delta_t + s_t$. In both cases $\hat{\delta}_t \geq \delta_t + s_t$. The same construction applies in frequency. Since $b^* \subseteq \hat{b}$, $\mathrm{IoU}(b^*, \hat{b}) = \delta_t \delta_f / (\hat{\delta}_t \hat{\delta}_f) \leq \delta_t \delta_f / [(\delta_t + s_t)(\delta_f + s_f)]$.
\end{proof}

\begin{corollary}[Narrow-band events can fall below detection threshold]
\label{cor:narrow_band}
Standard evaluation counts a prediction as a true positive only if $\mathrm{IoU} \geq 0.5$. With encoder grid strides $s_t = s_f = p = 16$ and a narrow-band vocalization with $\delta_f \leq p$ (frequency extent $\leq 1$ token), the IoU bound from Proposition~\ref{prop:iou_bound} is monotonically increasing in $\delta_f$, so it is maximized at $\delta_f = p$, giving:
$$
\mathrm{IoU}(b^*, \hat{b}) \leq \frac{\delta_t \cdot p}{(\delta_t + p)(p + p)} = \frac{\delta_t}{2(\delta_t + p)} < \frac{1}{2},
$$
for all finite $\delta_t$. By Proposition~\ref{prop:iou_bound}, there exists a placement achieving this bound, hence \textbf{there exist placements where the tightest enclosing grid-aligned box fails to reach IoU $\geq 0.5$} for any narrow-band event, regardless of its temporal extent.
\end{corollary}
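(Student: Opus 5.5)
The plan is to derive the corollary directly from Proposition~\ref{prop:iou_bound}. It suffices to treat the bound $g(\delta_t,\delta_f)=\delta_t\delta_f/[(\delta_t+s_t)(\delta_f+s_f)]$ as a function of $\delta_f$ with $\delta_t$ fixed, and specialize to $s_t=s_f=p$.

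\textbf{Step 1: monotonicity.} Factor $g$ as $\frac{\delta_t}{\delta_t+p}\cdot\frac{\delta_f}{\delta_f+p}$. The second factor equals $1-\frac{p}{\delta_f+p}$, so it is strictly increasing in $\delta_f>0$. On the range $0<\delta_f\le p$ it therefore attains its maximum at $\delta_f=p$, where it equals $1/2$.

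\textbf{Step 2: uniform bound.} For every $\delta_f\le p$ we get $g\le \frac{\delta_t}{2(\delta_t+p)}$. Since $\frac{\delta_t}{\delta_t+p}<1$ for every finite $\delta_t$ (because $p>0$), this quantity is strictly less than $1/2$.

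\textbf{Step 3: existence of the bad placement.} Proposition~\ref{prop:iou_bound} gives, for each fixed pair $(\delta_t,\delta_f)$, a placement of $b^*$ with $\mathrm{IoU}(b^*,\hat b)\le g(\delta_t,\delta_f)$. Chaining this with Step 2 yields $\mathrm{IoU}<1/2$ at that placement. Hence the tightest grid-aligned enclosing box fails the $0.5$ threshold for any narrow-band event, whatever its temporal extent.

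The only point needing care is the wording. The corollary says there is a placement "achieving" the bound, but the proposition only supplies an upper bound at some placement (it is actually strict when $r>0$). That is all the argument uses, so I would phrase the conclusion as "IoU at most the bound" rather than claim equality. Beyond that I expect no real obstacle. The argument is a one-variable monotonicity check followed by a direct appeal to the proposition.
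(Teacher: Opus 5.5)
Your proposal is correct and follows the paper's own argument: the bound is monotone in $\delta_f$, evaluating it at $\delta_f=p$ gives $\delta_t/(2(\delta_t+p))<1/2$, and Proposition~\ref{prop:iou_bound} supplies the bad placement. Your caveat about the word ``achieving'' is justified, since the proposition's construction only guarantees IoU at most the bound (with equality only when $\delta_t$ and $\delta_f$ are multiples of $p$), and that inequality is all the conclusion needs.
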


In our setting, each token spans $\Delta t_{\mathrm{tok}} = p \cdot h = 160\,\text{ms}$ in time, where $h = 10\,\text{ms}$ is the STFT hop, and $\Delta f_{\mathrm{tok}} = p \cdot \Delta f_{\mathrm{mel}} \approx 1\,\text{kHz}$ in frequency. Many animal vocalizations such as bird harmonics, typically with 200--500\,Hz bandwidth, and insect stridulations, typically with $<1$\,kHz, occupy at most one token in frequency while spanning many tokens in time, falling precisely into the regime of Corollary~\ref{cor:narrow_band}.

\subsection{Asymmetric Upsampling: Theoretical Analysis}

Our adapter applies learnable upsampling with asymmetric factors $u_t$ in time and $u_f$ in frequency, reducing the effective strides to $s_t' = p/u_t$ and $s_f' = p/u_f$. By the same worst-case analysis as Proposition~\ref{prop:iou_bound}:
$$
\mathrm{IoU}(b^*, \hat{b}) \leq \frac{\delta_t \cdot \delta_f}{\left(\delta_t + p/u_t\right)\left(\delta_f + p/u_f\right)}.
$$

\begin{proposition}[Optimal upsampling allocation]
\label{prop:optimal_upsample}
Given a fixed upsampling budget $u_t \cdot u_f = U$ (total spatial expansion), the IoU bound in Proposition~\ref{prop:iou_bound} is maximized when the upsampling factors are allocated proportionally to the stride-to-extent ratios:
$$
\frac{u_f}{u_t} = \frac{s_f / \delta_f}{s_t / \delta_t} = \frac{\delta_t}{\delta_f}.
$$
That is, the axis with greater relative quantization error should receive more upsampling.
\end{proposition}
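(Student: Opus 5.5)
We treat the upsampling factors as positive reals and optimize the worst-case IoU bound directly. The bound in question is the one stated just before Proposition~\ref{prop:optimal_upsample},
$$
G(u_t,u_f)=\frac{\delta_t\,\delta_f}{\left(\delta_t + s_t/u_t\right)\left(\delta_f + s_f/u_f\right)},
$$
which is Proposition~\ref{prop:iou_bound} applied with the reduced strides $s_t/u_t$ and $s_f/u_f$ (in our setting $s_t=s_f=p$). The numerator does not depend on $(u_t,u_f)$. So maximizing $G$ under $u_tu_f=U$ is the same as minimizing the denominator $D(u_t,u_f)=(\delta_t + s_t/u_t)(\delta_f + s_f/u_f)$ under that constraint.

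\textbf{Step 1: reduce to a two-term sum.} Expanding gives
$$
D=\delta_t\delta_f+\frac{s_f\,\delta_t}{u_f}+\frac{s_t\,\delta_f}{u_t}+\frac{s_ts_f}{u_tu_f}.
$$
On the constraint set the first and last terms are constant, the last one being $s_ts_f/U$. It therefore suffices to minimize
$$
h=\frac{s_f\delta_t}{u_f}+\frac{s_t\delta_f}{u_t}.
$$

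\textbf{Step 2: AM--GM.} The product of the two terms of $h$ is $s_ts_f\delta_t\delta_f/(u_tu_f)=s_ts_f\delta_t\delta_f/U$, which is fixed. By AM--GM, $h\ge 2\sqrt{s_ts_f\delta_t\delta_f/U}$, with equality exactly when the two terms are equal:
$$
\frac{s_f\delta_t}{u_f}=\frac{s_t\delta_f}{u_t},\qquad\text{i.e.}\qquad \frac{u_f}{u_t}=\frac{s_f/\delta_f}{s_t/\delta_t}.
$$
When $s_t=s_f=p$ this ratio becomes $\delta_t/\delta_f$, which is the claim. Combined with $u_tu_f=U$, the optimum is $u_f=\sqrt{U\delta_t/\delta_f}$ and $u_t=\sqrt{U\delta_f/\delta_t}$. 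Uniqueness follows because the equality case of AM--GM is unique. A cross-check is to substitute $u_f=U/u_t$, so that $h$ has the form $a/u_t+bu_t$ with $a,b>0$, which is strictly convex in $u_t$; the Lagrange condition then gives the same point.

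\textbf{Step 3: interpretation and caveats.} The ratio $s/\delta$ is the relative quantization error on an axis, so the result says that the axis with the larger relative error should receive proportionally more upsampling. For narrow-band, temporally long events ($\delta_f\ll\delta_t$) this favors $u_f>u_t$, which motivates the choice $(u_t,u_f)=(2,4)$.

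\textbf{Main obstacle.} The main difficulty is not the optimization itself but interpreting the statement correctly. Two points need care. First, the optimization is over continuous factors, whereas practical upsampling factors are integers (often powers of two). Second, the optimal ratio depends on a single pair $(\delta_t,\delta_f)$, while real events come in a distribution of shapes. I would state the result for a fixed event shape with real-valued factors. I would then remark that, because $h$ is convex in $\log u_t$, the best integer or power-of-two allocation is one of the two feasible points adjacent to the continuous optimum. I would also note that for a distribution of events one can optimize the expected denominator, which leads to the same AM--GM form with $\delta_t,\delta_f$ replaced by appropriate averages.
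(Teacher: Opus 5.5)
Your proof is correct, but it takes a different route from the paper's. The paper substitutes $u_f = U/u_t$ and writes $D(u_t) = (\delta_t + p/u_t)(\delta_f + p u_t/U)$. It then solves $D'(u_t) = 0$; there the $p u_t$ terms cancel, leaving $U\delta_f = u_t^2\delta_t$. Finally it argues that this critical point is the global minimizer because $D \to \infty$ as $u_t \to 0^+$ or $u_t \to \infty$.

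You instead expand $D$ and drop the two terms that are constant on the constraint set. The constancy of $s_t s_f/(u_t u_f)$ is exactly what produces the paper's cancellation. You then apply AM--GM to the remaining pair, whose product is fixed. This gives global optimality and uniqueness at once, with no derivative or boundary analysis.

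Your argument is also more general. Because you keep separate strides $s_t, s_f$, you actually derive the first equality of the statement, $u_f/u_t = (s_f/\delta_f)/(s_t/\delta_t)$. The paper's proof fixes $s_t = s_f = p$, so it only verifies the specialized ratio $\delta_t/\delta_f$. What the calculus route buys in return is independence from the fixed-product structure that makes AM--GM applicable.

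Your remark on integer factors is sound, since $a/u_t + b u_t$ is convex. Your remark on averaging over event shapes optimizes a surrogate, namely the expected denominator rather than the expected IoU bound. That is fine as long as you present it that way.
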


\begin{proof}
Since the numerator $\delta_t \delta_f$ is constant, maximizing $\mathrm{IoU}_{\mathrm{ub}}$ is equivalent to minimizing the denominator $D(u_t) = (\delta_t + p/u_t)(\delta_f + p u_t/U)$ after substituting $u_f = U/u_t$. Setting $D'(u_t) = 0$:
$$
\frac{-p}{u_t^2}\!\left(\delta_f + \frac{p u_t}{U}\right) + \left(\delta_t + \frac{p}{u_t}\right)\!\frac{p}{U} = 0
\quad\Longrightarrow\quad
\frac{\delta_f + p u_t/U}{u_t^2} = \frac{\delta_t + p/u_t}{U}.
$$
Cross-multiplying and expanding: $U\delta_f + p u_t = u_t^2 \delta_t + p u_t$. The $p u_t$ terms cancel, giving $U\delta_f = u_t^2 \delta_t$, so $u_t = \sqrt{U \delta_f / \delta_t}$ and $u_f = U/u_t = \sqrt{U \delta_t / \delta_f}$, hence $u_f/u_t = \delta_t/\delta_f$. This is a maximum since $D(u_t) \to \infty$ as $u_t \to 0^+$ or $u_t \to \infty$.
\end{proof}

For typical rainforest vocalizations with $\delta_t \gg \delta_f$ (temporally extended, frequency-narrow), this predicts $u_f > u_t$, which is exactly our design choice ($u_t = 2, u_f = 4$, ratio $u_f/u_t = 2$). With these factors, the adapted feature map has size $128 \times 32$ and effective strides of $8 \times 4$ pixels. For a narrow-band event with $\delta_f = p = 16$, the IoU upper bound improves from $< 0.5$ (Corollary~\ref{cor:narrow_band}) to $\delta_t \cdot 16 / [(\delta_t + 8) \cdot 20]$, which exceeds 0.5 for $\delta_t \geq 14$ (events longer than ${\sim}$140\,ms).

\subsection{Anisotropic Convolution: Separability Analysis}

\begin{proposition}[Parameter efficiency of separable filtering]
\label{prop:separable}
Let $\mathcal{F}_{\mathrm{iso}}: \mathbb{R}^{C \times H \times W} \to \mathbb{R}^{C \times H \times W}$ be an isotropic convolution with kernel size $k \times k$, and $\mathcal{F}_{\mathrm{sep}} = \mathcal{F}_{1 \times k} \circ \mathcal{F}_{k \times 1}$ be the composition of two anisotropic convolutions. Then:
\begin{enumerate}[label=(\roman*)]
    \item \textbf{Parameter reduction:} $|\Theta_{\mathrm{sep}}| = 2C^2 k$ versus $|\Theta_{\mathrm{iso}}| = C^2 k^2$, a factor of $k/2$ savings.
    \item \textbf{Receptive field preservation:} Both achieve effective receptive field $k \times k$ on the input.
    \item \textbf{Rank constraint:} The effective spatial kernel of $\mathcal{F}_{\mathrm{sep}}$ for each input--output channel pair has rank at most $\min(C, k)$, whereas a general $k \times k$ filter is unconstrained and can achieve full rank $k$. The composition processes time and frequency axes sequentially, imposing a structural separability that acts as beneficial inductive bias for bioacoustic events with approximately independent temporal and spectral structure.
\end{enumerate}
\end{proposition}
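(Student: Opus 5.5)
The plan is to treat $\mathcal{F}_{k\times 1}$, $\mathcal{F}_{1\times k}$ and $\mathcal{F}_{\mathrm{iso}}$ as bias-free, stride-one, ``same''-padded linear convolutions with $C$ input and $C$ output channels, and with no nonlinearity between the two factors of $\mathcal{F}_{\mathrm{sep}}$. Biases only add a constant per channel and do not affect any of the three claims, so I will set them aside at the outset. Under this convention all three items become elementary statements about the weight tensors. I would write the weights of $\mathcal{F}_{k\times 1}$ as $a_{m,c}\in\mathbb{R}^{k}$, a column filter from input channel $c$ to intermediate channel $m$. I would write the weights of $\mathcal{F}_{1\times k}$ as $b_{o,m}\in\mathbb{R}^{k}$, a row filter from channel $m$ to output channel $o$.

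\textbf{(i) Parameter count.} A $k\times k$ convolution mapping $C$ channels to $C$ channels has one $k\times k$ kernel per ordered channel pair. That gives $C^2k^2$ weights. Each anisotropic layer has $C^2$ kernels of length $k$, so it has $C^2k$ weights, and the composition has $2C^2k$. The ratio is $C^2k^2/(2C^2k)=k/2$. This is just counting.

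\textbf{(ii) Receptive field.} Output position $(h,w)$ of $\mathcal{F}_{k\times 1}$ depends on inputs at $(h+i,w)$ with $|i|\le \lfloor k/2\rfloor$. Applying $\mathcal{F}_{1\times k}$ then pulls in positions $(h',w+j)$ with $|j|\le\lfloor k/2\rfloor$. Composing the two index sets gives the full $k\times k$ window $\{(h+i,w+j)\}$. This is exactly the support of $\mathcal{F}_{\mathrm{iso}}$. For generic weights every position in the window has a nonzero coefficient, so the effective receptive field is the full window and not merely a subset of it.

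\textbf{(iii) Rank.} This is the substantive step. By linearity and associativity of convolution, the composite map from input channel $c$ to output channel $o$ is a single $k\times k$ convolution. Its kernel is
\[
K_{o,c}=\sum_{m=1}^{C} a_{m,c}\, b_{o,m}^{\top}\in\mathbb{R}^{k\times k}.
\]
This is a sum of $C$ outer products, each of rank at most one. Hence $\operatorname{rank}K_{o,c}\le C$, and trivially $\operatorname{rank}K_{o,c}\le k$, which gives $\operatorname{rank}K_{o,c}\le\min(C,k)$. For the isotropic layer I would exhibit a full-rank kernel, for example the identity $I_k$ placed on a single channel pair, to show that rank $k$ is attainable.

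I expect the main obstacle to be interpretive rather than technical. When $C\ge k$, as in practice, the bound $\min(C,k)=k$ is vacuous. In that regime the genuine structural constraint is different: the $C^2$ kernels $K_{o,c}$ all lie in the span of a shared family of $C$ column factors and $C$ row factors. I would state this explicitly as the precise content of the claimed ``separability'' bias, and I would note that the strict rank-one case arises when $C=1$. The remark that this constraint is ``beneficial'' for bioacoustic events is a modeling statement, not something the argument proves, and I would present it that way.
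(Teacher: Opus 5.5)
Your proposal is correct and follows the paper's proof essentially step for step: direct counting for (i), composing the $k\times 1$ and $1\times k$ supports for (ii), and writing each channel-pair kernel as $K_{o,c}=\sum_{m} a_{m,c}\,b_{o,m}^{\top}$, a sum of $C$ rank-one terms, for (iii). Your extra caveats are sound and sharper than the paper, which tacitly assumes no nonlinearity between the two factors and does not note that $\min(C,k)=k$ in practice. The cleanest way to state the genuine joint constraint you point to is this: the $Ck\times Ck$ matrix with rows indexed by $(o,j)$ and columns by $(c,i)$ factors through the $C$ intermediate channels, so it has rank at most $C$, versus up to $Ck$ for a general $k\times k$ convolution.
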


\begin{proof}
(i) The $k \times 1$ filter has $C^2 k$ parameters and the $1 \times k$ filter has $C^2 k$ parameters, totaling $2C^2 k$. The ratio is $C^2 k^2 / (2C^2 k) = k/2$.
(ii) The $k \times 1$ filter sees $k$ rows and 1 column; the subsequent $1 \times k$ filter sees 1 row and $k$ columns of the intermediate map, so each output pixel depends on a $k \times k$ neighborhood.
(iii) For a fixed channel pair $(c_{\mathrm{in}}, c_{\mathrm{out}})$, the effective $k \times k$ spatial kernel is $K[i,j] = \sum_{c_m=1}^{C} W_f[c_{\mathrm{out}}, c_m, j] \cdot W_t[c_m, c_{\mathrm{in}}, i]$, a sum of $C$ rank-1 outer products. Hence $\mathrm{rank}(K) \leq \min(C, k)$, whereas a general $k \times k$ filter is unconstrained and can achieve full rank $k$. More fundamentally, $\mathcal{F}_{\mathrm{sep}}$ processes time and frequency axes sequentially rather than jointly, so it cannot model non-separable cross-axis interactions. For signals where temporal and spectral structure are approximately independent, as in harmonic animal vocalizations with smooth temporal envelopes, this separability constraint matches the signal structure and suppresses spurious cross-axis correlations.
\end{proof}

This analysis explains why the anisotropic adapter achieves the best F1 among adapter variants (Table~\ref{tab:adapter_ablation_ood}): the separability constraint matches the physical structure of bioacoustic events, providing effective regularization without sacrificing expressiveness for the target signal class.

\section{Theoretical Analysis of Self-Training}
\label{app:self_training_theory}

Motivated by domain adaptation theory \citep{ben2010theory} and recent analyses of self-training under distribution shift \citep{kumar2020understanding,wei2020theoretical}, we provide theoretical justification for MAST's two-stage self-training curriculum, confidence-weighted pseudo-label training, and the improvement conditions under iterative self-training. The following results adapt established frameworks to our specific setting of two-stage pseudo-label exploration and expert refinement for time--frequency detection.

\subsection{Self-Training as Distributionally Robust Optimization}

Let $P_\ell$ denote the labeled source distribution and $P_u$ denote the unlabeled target distribution, which may include temporal and cross-site domain shift. The standard supervised objective minimizes $\mathbb{E}_{(S,B) \sim P_\ell}[\mathcal{L}(f_\theta(S), B)]$, but the deployed detector must perform well under $P_u$. Following the distributional robustness perspective of \citet{ben2010theory}, we characterize our two-stage self-training as approximate DRO that progressively reduces the gap between training and deployment distributions.

\begin{proposition}[Self-training reduces distributional gap]
\label{prop:dro}
Let $f^{(0)}$ be a detector trained on $P_\ell$, and let $\hat{P}_u^{(t)} = \Pi_q(P_u; f^{(t)})$ denote the pseudo-labeled distribution induced by applying detector $f^{(t)}$ to unlabeled data with confidence threshold $q$. Define the \emph{effective training distribution} at round $t$ of the two-stage curriculum as:
$$
P_{\mathrm{eff}}^{(t)} = (1 - \alpha) \hat{P}_u^{(t)} + \alpha \, P_\ell,
$$
where $\alpha \in (0,1)$ reflects the relative contribution of Stage~2. Let $d_{\mathrm{TV}}(\cdot, \cdot)$ denote total variation distance. If the pseudo-labeling operator has precision $\pi_q = \Pr[\hat{b} \text{ correct} \mid s(\hat{b}) \geq q]$ and the detector achieves coverage $\rho_q = \Pr_{b^* \sim P_u}[\exists \hat{b}: \mathrm{IoU}(\hat{b}, b^*) \geq 0.5 \wedge s(\hat{b}) \geq q]$, then:
$$
d_{\mathrm{TV}}(P_{\mathrm{eff}}^{(t)}, P_u) \leq (1 - \alpha)(1 - \rho_q \pi_q) + \alpha \, d_{\mathrm{TV}}(P_\ell, P_u).
$$
\end{proposition}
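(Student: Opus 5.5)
The plan has two steps: split the mixture with convexity of total variation, then bound the pseudo-labeled term through an explicit coupling. Total variation is jointly convex, so since $P_u = (1-\alpha)P_u + \alpha P_u$ we get
\begin{equation*}
d_{\mathrm{TV}}(P_{\mathrm{eff}}^{(t)}, P_u) \leq (1-\alpha)\, d_{\mathrm{TV}}(\hat{P}_u^{(t)}, P_u) + \alpha\, d_{\mathrm{TV}}(P_\ell, P_u).
\end{equation*}
The second term already matches the claim. It remains to show $d_{\mathrm{TV}}(\hat{P}_u^{(t)}, P_u) \leq 1 - \rho_q \pi_q$.

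For that term I would use the coupling characterization $d_{\mathrm{TV}}(P,Q) \leq \Pr[X \neq Y]$, valid for any coupling $(X,Y)$ with $X\sim P$ and $Y\sim Q$. Both distributions share the same marginal over spectrograms $S \sim P_u$. The natural coupling draws one $S$ and pairs its true annotation $(S, B^*)$ with its pseudo-annotation $(S, \Pi_q(S; f^{(t)}))$. The two labeled samples then coincide exactly on the event that the pseudo-label is correct. I would read ``correct'' in the object-level sense: the true box is covered by a retained prediction with IoU $\geq 0.5$, and that prediction is a genuine event.

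On that event, the chain rule gives
\begin{equation*}
\Pr[\text{agree}] \geq \Pr[\text{covered}] \cdot \Pr[\text{correct} \mid \text{retained}] = \rho_q \pi_q .
\end{equation*}
Hence $\Pr[X\neq Y] \leq 1 - \rho_q\pi_q$, and substituting into the convexity bound gives the statement.

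The main obstacle is this last identification, which mixes box-level probabilities ($\rho_q$ is defined over true boxes $b^*$, $\pi_q$ over retained predictions $\hat{b}$) with a total-variation distance between distributions over whole labeled chunks $(S,B)$. To make it rigorous I would first fix the sample space at the box level. Concretely, view $\hat{P}_u^{(t)}$ and $P_u$ as distributions over pairs (chunk, single box) by sampling a box from each annotation set. I would also treat a box as matched at the IoU $\geq 0.5$ equivalence level, so that a correct, covering pseudo-box counts as equal to its ground-truth box.

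On this space, the event ``$b^*$ is covered by a retained $\hat{b}$ that is correct'' is the intersection of coverage and precision. Its probability factorizes as $\rho_q\pi_q$ under an independence (or conditional-precision) assumption: precision among retained boxes is the same for the boxes that do the covering. I would state that assumption explicitly, noting that it is implicit in the proposition. I would also remark that without it the bound still holds with $\rho_q\pi_q$ replaced by the joint probability of correct coverage.
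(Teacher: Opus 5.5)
Your route is the paper's: joint convexity of $d_{\mathrm{TV}}$ splits off $\alpha\, d_{\mathrm{TV}}(P_\ell,P_u)$, and the shared $S$-marginal is used to get $d_{\mathrm{TV}}(\hat{P}_u^{(t)},P_u)\le 1-\rho_q\pi_q$. The paper simply asserts this second step (the pseudo-labeled distribution ``agrees with $P_u$ on the fraction $\rho_q\pi_q$''). You correctly flag it as the weak point, but your repair does not close it.

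First, on labeled chunks the step is false, not merely unproven. Suppose $\Pi_q(S;f^{(t)})\neq B^*(S)$ for almost every $S$, which is the generic case since predicted coordinates never reproduce expert boxes exactly. Then $\hat{P}_u^{(t)}$ gives zero mass to the graph of $B^*$, which carries all of $P_u$, so $d_{\mathrm{TV}}(\hat{P}_u^{(t)},P_u)=1$. Taking $P_\ell=P_u$, the left side of the proposition is then exactly $1-\alpha$, which exceeds the claimed $(1-\alpha)(1-\rho_q\pi_q)$ whenever $\rho_q\pi_q>0$. So your box-level coarsening is a reinterpretation of the statement, not a technicality, and should be presented as one.

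Second, within that reinterpretation, pairing a drawn $b^*$ with its covering $\hat b$ is not a coupling of $P_u$ with $\hat{P}_u^{(t)}$. False-positive pseudo-boxes are never drawn, so the second marginal is wrong, and this is exactly where precision enters. Your fallback shows the problem. Suppose that in every chunk all true boxes are correctly covered but $90\%$ of retained boxes overlap no true box. The ``joint probability of correct coverage'' is $1$, yet $\hat{P}_u^{(t)}$ puts mass $0.9$ off the support of $P_u$, so the distance is at least $0.9$. Likewise, your displayed inequality multiplies a probability over true boxes by a conditional probability over retained boxes, which no chain rule licenses. Finally, IoU $\ge 0.5$ is not transitive, so it does not define an equivalence relation.

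What does work is computing the distance directly, with no coupling and no independence assumption:
\begin{enumerate}
\item Fix a one-to-one matching at IoU $\ge 0.5$ between retained and true boxes, as in the evaluation protocol.
\item Let $P_u$ and $\hat{P}_u^{(t)}$ be the laws of (chunk, box) for a true box, respectively a retained box, drawn uniformly from the whole pool. This matches how $\rho_q$ and $\pi_q$ are defined, rather than per chunk.
\item Push $\hat{P}_u^{(t)}$ forward by the map sending each matched pseudo-box to its partner, still writing $\hat{P}_u^{(t)}$ for the result.
\end{enumerate}
With $M$ matched pairs among $N^*$ true and $\hat N$ retained boxes, $\rho_q=M/N^*$ and $\pi_q=M/\hat N$, and the overlap formula gives
\[
d_{\mathrm{TV}}(\hat{P}_u^{(t)},P_u)=1-M\min\!\left(\frac{1}{N^*},\frac{1}{\hat N}\right)=1-\min(\rho_q,\pi_q)\le 1-\rho_q\pi_q,
\]
because $\rho_q,\pi_q\in[0,1]$. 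For a population rather than a finite pool, replace counts by expectations. Your convexity step then finishes the argument.

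Note that $\rho_q$ and $\pi_q$ must be the matched recall and precision. With the literal ``$\exists\hat b$'' coverage, a single retained box covering two nearly coincident true boxes gives $\rho_q=\pi_q=1$ but distance $1/2$ in this construction.
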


\begin{proof}
Since $\hat{P}_u^{(t)}$ and $P_u$ share the same marginal on $S$ (pseudo-labeling only modifies the annotation), their TV distance is determined by label disagreement. The pseudo-labeled distribution agrees with $P_u$ on the fraction $\rho_q \pi_q$ of events that are both covered and correctly labeled. The remaining mass $(1-\rho_q\pi_q)$ may have incorrect or missing labels, contributing at most $(1-\rho_q\pi_q)$ to $d_{\mathrm{TV}}(\hat{P}_u^{(t)}, P_u)$. By convexity of total variation:
\begin{align*}
d_{\mathrm{TV}}(P_{\mathrm{eff}}^{(t)}, P_u) &\leq (1-\alpha) \, d_{\mathrm{TV}}(\hat{P}_u^{(t)}, P_u) + \alpha \, d_{\mathrm{TV}}(P_\ell, P_u) \\
&\leq (1-\alpha)(1 - \rho_q \pi_q) + \alpha \, d_{\mathrm{TV}}(P_\ell, P_u).
\end{align*}
\end{proof}

This result shows that as pseudo-label quality ($\pi_q$) and coverage ($\rho_q$) improve across self-training rounds, $P_{\mathrm{eff}}^{(t)}$ converges toward $P_u$, reducing the effective domain gap. The two-stage design is crucial: Stage~1 improves $\rho_q$, the coverage via diverse pseudo-labeled data, while Stage~2 controls $\alpha$, the re-anchoring to clean labels prevents drift.

\subsection{Confidence Weighting as Importance-Weighted Risk Minimization}

In Stage~1, we weight each pseudo-labeled sample by its average pseudo confidence $\bar{g}(S) = \frac{1}{|\tilde{B}|} \sum_{\hat{b} \in \tilde{B}} s(\hat{b})$. Building on the importance-weighted learning framework for noisy labels \citep{natarajan2013learning}, we observe that this weighting can be interpreted as importance-weighted empirical risk minimization that provably reduces the effective noise rate.

\begin{proposition}[Confidence weighting reduces effective label noise]
\label{prop:confidence_weighting}
Let $\epsilon(s)$ denote the label error probability for a pseudo-label with confidence $s$, and assume $\epsilon(s)$ is monotonically decreasing in $s$ (higher confidence $\Rightarrow$ lower error rate). Given $n$ pseudo-labeled samples $(S_i, \tilde{B}_i)_{i=1}^n$, the confidence-weighted empirical risk
$$
\hat{R}_w(\theta) = \frac{1}{Z} \sum_{i=1}^n \bar{g}(S_i) \, \mathcal{L}(f_\theta(S_i), \tilde{B}_i), \quad Z = \sum_{i=1}^n \bar{g}(S_i),
$$
has effective noise rate
$$
\bar{\epsilon}_w = \frac{\sum_i \bar{g}(S_i) \, \epsilon(\bar{g}(S_i))}{\sum_i \bar{g}(S_i)} \leq \frac{\sum_i \epsilon(\bar{g}(S_i))}{n} = \bar{\epsilon}_{\mathrm{unif}},
$$
where $\bar{\epsilon}_{\mathrm{unif}}$ is the uniform (unweighted) noise rate.
\end{proposition}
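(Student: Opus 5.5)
The inequality is a weighted-versus-unweighted average comparison, so I would prove it with Chebyshev's sum inequality. Write $g_i = \bar{g}(S_i) \in [0,1]$ and $e_i = \epsilon(g_i)$. Because $\epsilon$ is monotonically decreasing, the sequences $(g_i)$ and $(e_i)$ are oppositely ordered: whenever $g_i \geq g_j$ we have $e_i \leq e_j$. The claim
$$
\frac{\sum_i g_i e_i}{\sum_i g_i} \leq \frac{1}{n}\sum_i e_i
$$
is equivalent, after multiplying through by $n \sum_i g_i > 0$, to
$$
n \sum_i g_i e_i \leq \Bigl(\sum_i g_i\Bigr)\Bigl(\sum_i e_i\Bigr).
$$
This is exactly Chebyshev's sum inequality for oppositely ordered sequences.

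To keep the argument self-contained, I would prove that step through the pairwise identity
$$
n\sum_i g_i e_i - \Bigl(\sum_i g_i\Bigr)\Bigl(\sum_i e_i\Bigr) = \frac{1}{2}\sum_{i,j} (g_i - g_j)(e_i - e_j).
$$
It follows by expanding the right-hand side into four double sums and collecting terms. Each summand $(g_i - g_j)(e_i - e_j)$ is $\leq 0$ because $g$ and $e$ are oppositely ordered, so the left-hand side is $\leq 0$, which is the desired inequality.

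I would then add two brief remarks.
\begin{itemize}
\item \emph{Equality.} Equality holds if and only if every pairwise product vanishes, for example when all confidences coincide or $\epsilon$ is constant on the observed confidences. So the weighting gives a strict improvement exactly when it actually discriminates between samples.
\item \emph{Interpretation.} The quantity $\bar{\epsilon}_w$ is the expected per-sample error under the normalized weights $g_i/Z$, which are precisely the weights appearing in $\hat{R}_w(\theta)$. So it is legitimately the noise rate seen by the weighted risk.
\end{itemize}

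The main obstacle is not the algebra but the well-posedness of the statement. First, $\epsilon$ is defined per pseudo-box confidence $s$, yet it is applied to the image-level average $\bar{g}(S)$. I would handle this by stating explicitly that we treat $\epsilon(\bar{g}(S_i))$ as the per-sample error rate, which is the modelling assumption the proposition actually uses. Second, I need $Z > 0$. This holds whenever some pseudo-box is retained with positive confidence, which is guaranteed because the threshold satisfies $q > 0$ and samples with empty $\tilde{B}$ are excluded, so $\bar{g}$ is defined. With these conventions fixed, the Chebyshev argument goes through directly.
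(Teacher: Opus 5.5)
Your proof is correct and follows the paper's argument: both set $a_i=\bar{g}(S_i)$ and $b_i=\epsilon(\bar{g}(S_i))$, note that they are oppositely ordered because $\epsilon$ is decreasing, apply Chebyshev's sum inequality, and divide by $n\sum_i \bar{g}(S_i)$. Your pairwise identity makes the Chebyshev step self-contained, and your equality condition (every $(g_i-g_j)(e_i-e_j)$ vanishing) is more accurate than the paper's follow-up claim that any non-identical confidences give strict inequality, since that claim fails when $\epsilon$ is only non-strictly decreasing.
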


\begin{proof}
By the assumption that $\epsilon(s)$ is decreasing, samples with high $\bar{g}$ (high weight) have low $\epsilon$, and vice versa. Setting $a_i = \bar{g}(S_i)$ and $b_i = \epsilon(\bar{g}(S_i))$, the monotonicity of $\epsilon$ ensures that $a_i$ and $b_i$ are oppositely ordered (i.e., whenever $a_i \leq a_j$ we have $b_i \geq b_j$), which is exactly the condition required by the Chebyshev sum inequality:
$$
n \sum_i a_i b_i \leq \left(\sum_i a_i\right)\left(\sum_i b_i\right).
$$
Substituting back and dividing both sides by $n \sum_i \bar{g}(S_i)$:
$$
\bar{\epsilon}_w = \frac{\sum_i \bar{g}(S_i)\,\epsilon(\bar{g}(S_i))}{\sum_i \bar{g}(S_i)} \leq \frac{\sum_i \epsilon(\bar{g}(S_i))}{n} = \bar{\epsilon}_{\mathrm{unif}}.
$$
\end{proof}

The inequality is strict whenever the confidence scores are non-degenerate (not all identical), meaning confidence weighting always reduces effective noise compared to uniform training on the same pseudo-labeled data. This provides theoretical grounding for the empirical observation that confidence weighting improves OOD generalization in Stage~1.

\subsection{Monotonic Improvement under Iterative Self-Training}

Prior work has established convergence guarantees for iterative self-training under various distributional assumptions, including gradual domain shift \citep{kumar2020understanding}, expansion conditions \citep{wei2020theoretical}, and mixture-model structure \citep{frei2022self}. We adapt these ideas to our two-stage curriculum and identify three sufficient conditions under which successive rounds produce non-degrading OOD performance.

\begin{proposition}[Sufficient condition for monotonic OOD improvement]
\label{prop:monotonic}
Let $f^{(t)}$ denote the detector after round $t$ of self-training, and let $R_{\mathrm{ood}}(f) = \mathbb{E}_{P_u}[\mathcal{L}(f(S), B)]$ denote the OOD risk. Assume the loss $\mathcal{L}$ is $L$-Lipschitz in the model parameters. Suppose:
\begin{enumerate}[label=(\roman*)]
    \item \textbf{Improving pseudo-labels:} The precision-coverage product satisfies $\rho_q^{(t+1)} \pi_q^{(t+1)} \geq \rho_q^{(t)} \pi_q^{(t)}$, and the resulting OOD risk reduction from Stage~1 is at least $\Delta^{(t)} > 0$;
    \item \textbf{Bounded Stage~2 drift:} Stage~2 fine-tuning runs for $T_2$ steps with learning rate $\eta_2$ and gradient norm bounded by $G$, so the parameter drift satisfies $\|f_{\mathrm{stage2}}^{(t+1)} - f_{\mathrm{stage1}}^{(t+1)}\| \leq T_2 \eta_2 G =: \delta_2$;
    \item \textbf{Bounded Stage~1 forgetting:} The gradual unfreezing schedule limits catastrophic forgetting such that $\|f_{\mathrm{stage1}}^{(t+1)} - f^{(t)}\|_\infty \leq \beta$, where $\beta > 0$ is controlled by the backbone learning rate $\gamma \eta_{\mathrm{base}}$ and the number of unfrozen blocks $k(e)$ (Remark~\ref{prop:unfreezing}).
\end{enumerate}
If $L \delta_2 \leq \Delta^{(t)}$, i.e., Stage~2 parameter drift is small enough that Lipschitz-induced risk increase does not exceed the Stage~1 gain, then $R_{\mathrm{ood}}(f^{(t+1)}) \leq R_{\mathrm{ood}}(f^{(t)})$.
\end{proposition}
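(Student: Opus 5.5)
The plan is to split one round of self-training into its two phases and telescope the OOD risk along the trajectory $f^{(t)} \to f_{\mathrm{stage1}}^{(t+1)} \to f_{\mathrm{stage2}}^{(t+1)} = f^{(t+1)}$. I would write
$$
R_{\mathrm{ood}}(f^{(t+1)}) - R_{\mathrm{ood}}(f^{(t)}) = \bigl[R_{\mathrm{ood}}(f_{\mathrm{stage2}}^{(t+1)}) - R_{\mathrm{ood}}(f_{\mathrm{stage1}}^{(t+1)})\bigr] + \bigl[R_{\mathrm{ood}}(f_{\mathrm{stage1}}^{(t+1)}) - R_{\mathrm{ood}}(f^{(t)})\bigr]
$$
and bound each bracket separately. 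The goal is to show that the sum is nonpositive.

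\textbf{Stage~1 bracket.} Condition (i) directly asserts that the Stage~1 update, trained on pseudo-labels whose precision--coverage product $\rho_q\pi_q$ is non-decreasing, lowers OOD risk by at least $\Delta^{(t)}$. This gives
$$
R_{\mathrm{ood}}(f_{\mathrm{stage1}}^{(t+1)}) \leq R_{\mathrm{ood}}(f^{(t)}) - \Delta^{(t)}.
$$
Proposition~\ref{prop:dro} serves as motivation here. A larger $\rho_q\pi_q$ shrinks the bound on $d_{\mathrm{TV}}(P_{\mathrm{eff}}^{(t)},P_u)$, which is why a positive $\Delta^{(t)}$ is plausible. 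Within this proposition, however, the reduction is taken as given by hypothesis and does not need to be derived. Condition (iii) enters only to guarantee that Stage~1 stays in a neighborhood of $f^{(t)}$ of size $\beta$, where the Lipschitz and gradient bounds are valid. It is not needed for the final inequality.

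\textbf{Stage~2 bracket.} Summing the $T_2$ gradient steps, each of norm at most $\eta_2 G$, and applying the triangle inequality gives a parameter displacement of at most $\delta_2$, which is condition (ii). Since $\mathcal{L}$ is $L$-Lipschitz in the parameters pointwise in $(S,B)$, taking expectations under $P_u$ shows that $R_{\mathrm{ood}}$ is also $L$-Lipschitz. Therefore
$$
R_{\mathrm{ood}}(f_{\mathrm{stage2}}^{(t+1)}) - R_{\mathrm{ood}}(f_{\mathrm{stage1}}^{(t+1)}) \leq L\delta_2.
$$

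Combining the two brackets gives
$$
R_{\mathrm{ood}}(f^{(t+1)}) \leq R_{\mathrm{ood}}(f^{(t)}) - \Delta^{(t)} + L\delta_2 \leq R_{\mathrm{ood}}(f^{(t)}),
$$
using $L\delta_2 \leq \Delta^{(t)}$.

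The main obstacle is justifying the passage from pointwise Lipschitzness to the expected risk. This requires integrability and a consistent identification of $f$ with its parameter vector, because the statement mixes $\|\cdot\|$ on parameters in (ii) with $\|\cdot\|_\infty$ on functions in (iii). I would fix the parameter norm throughout and treat (iii) only as a locality condition, so that the argument is unaffected by this mismatch.
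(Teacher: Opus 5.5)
Your proposal is correct and follows the paper's proof essentially step for step. It uses the same split $f^{(t)} \to f_{\mathrm{stage1}}^{(t+1)} \to f_{\mathrm{stage2}}^{(t+1)}$, the Stage~1 drop of $\Delta^{(t)}$, the Lipschitz bound $L\delta_2$ for Stage~2, and the closing chain using $L\delta_2 \leq \Delta^{(t)}$. The only difference is that the paper also sketches where $\Delta^{(t)}$ comes from, via Proposition~\ref{prop:dro} and the Ben-David domain-adaptation bound, whereas you take it directly from hypothesis (i); that is legitimate, and both arguments use (iii) only as a locality condition, not in the final inequality.
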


\begin{proof}[Proof sketch]
By Proposition~\ref{prop:dro}, improving $\rho_q \pi_q$ (condition i) reduces $d_{\mathrm{TV}}(P_{\mathrm{eff}}^{(t+1)}, P_u)$. Since $d_{\mathrm{TV}}$ upper-bounds the $\mathcal{H}\Delta\mathcal{H}$-divergence of \citet{ben2010theory}, this tightens their generalization bound $R_u(f) \leq R_s(f) + d_{\mathcal{H}\Delta\mathcal{H}}(P_s, P_u) + \lambda$, yielding a concrete OOD risk reduction $\Delta^{(t)} > 0$ proportional to the TV distance decrease: $R_{\mathrm{ood}}(f_{\mathrm{stage1}}^{(t+1)}) \leq R_{\mathrm{ood}}(f^{(t)}) - \Delta^{(t)}$. Condition (iii) ensures that optimization remains within a $\beta$-ball of the previous iterate, so the ERM solution does not escape the basin where the tighter bound applies. For Stage~2, by Lipschitz continuity and condition (ii): $|R_{\mathrm{ood}}(f_{\mathrm{stage2}}^{(t+1)}) - R_{\mathrm{ood}}(f_{\mathrm{stage1}}^{(t+1)})| \leq L\delta_2$. Since $L\delta_2 \leq \Delta^{(t)}$, we have $R_{\mathrm{ood}}(f^{(t+1)}) \leq R_{\mathrm{ood}}(f_{\mathrm{stage1}}^{(t+1)}) + L\delta_2 \leq R_{\mathrm{ood}}(f^{(t)}) - \Delta^{(t)} + L\delta_2 \leq R_{\mathrm{ood}}(f^{(t)})$.
\end{proof}

\noindent \textbf{Empirical verification.} Our design enforces these conditions in practice: (i)~a better detector at each round produces higher-quality pseudo-labels, verified by increasing pseudo-label F1 across rounds; (ii)~Stage~2 uses a low learning rate ($\eta_2 = 5{\times}10^{-5}$, backbone at $0.02{\times}$) for only $T_2{=}40$ epochs with gradient clipping at $G{=}1.0$, bounding parameter drift $\delta_2$; gradient clipping locally enforces the Lipschitz condition by ensuring $\|\nabla \mathcal{L}\| \leq G$ at every step; additionally, OOD-based model selection provides an operational safeguard against excessive drift; (iii)~gradual unfreezing with conservative schedules in Stage~1 (warmup${}=20$, interval${}=10$) and component-specific LR multipliers (backbone at $0.005{\times}$) bound parameter drift. See Table~\ref{tab:staged_st_hparams} for the full hyperparameter specification.

\subsection{Gradual Unfreezing as Regularized Fine-Tuning}

The asymmetry between Stage~1 with conservative unfreezing and Stage~2 with aggressive unfreezing has a natural interpretation through the lens of regularization theory.

\begin{remark}[Unfreezing schedule controls effective model complexity]
\label{prop:unfreezing}
Consider a ViT encoder with $L$ transformer blocks. At epoch $e$, let $k(e) \leq L$ blocks be unfrozen with learning rate $\eta_{\mathrm{bb}} = \gamma \eta_{\mathrm{base}}$ where $\gamma \ll 1$. The number of actively trained parameters is $d_{\mathrm{eff}}(e) = d_{\mathrm{head}} + d_{\mathrm{adapter}} + k(e) \cdot d_{\mathrm{block}}$, and the backbone's effective update magnitude scales with $\gamma k(e)$. Under the conservative Stage~1 schedule where $k(e)$ increases slowly with large warmup, the model first learns task-specific head and adapter parameters on noisy pseudo-labels without disturbing pretrained representations. Under the aggressive Stage~2 schedule where $k(e)$ increases rapidly, the full model adapts to clean expert labels.
\end{remark}

This staged complexity control prevents a key failure mode: if the full backbone is unfrozen on noisy pseudo-labeled data, the pretrained representations may overfit to systematic pseudo-label errors, a form of \emph{confirmation bias}. By keeping the backbone frozen during early Stage~1 training, the detector head first calibrates its decision boundary before the backbone features are permitted to drift.

The component-specific learning rates further partition the optimization landscape: backbone blocks receive $0.005{\times}$ the base rate in Stage~1, to preserve general audio features, versus $0.05{\times}$ in Stage~2, to allow fine-grained adaptation to the expert-labeled distribution. This $10{\times}$ ratio between stages reflects the relative trustworthiness of pseudo vs.\ expert supervision.

\section{Detailed Experimental Setup}

\subsection{Data Processing}
\label{app:processing}

\noindent \textbf{Audio preprocessing.}
All raw audio is resampled to 16\,kHz mono. We segment each recording into 10.24\,s chunks with 50\% overlap using 5.12\,s hop, yielding maximum coverage while ensuring every segment fits the MAST input size. Chunks shorter than 50\% of the target duration are discarded, and those between 50--100\% are zero-padded. Before feature extraction, each chunk is mean-centered to remove DC offset.

\noindent \textbf{Spectrogram extraction.}
We compute log-mel spectrograms using Kaldi-compatible filterbanks via \texttt{torchaudio.compliance.kaldi.fbank} with the following parameters: 25\,ms Hanning window, 10\,ms frame shift, 128 mel bins, HTK-compatible mel scale, no energy feature, and no dither. This produces a $1024 \times 128$ (time $\times$ frequency) feature matrix per chunk, with frequency range $[20, 8000]$\,Hz. These settings are identical to those used by AudioMAE \citep{AudioMAE}, ensuring compatibility with the pretrained encoder. We apply dataset-wide z-score normalization during training: $(S - \mu) / (\sigma + \epsilon)$, where $\mu, \sigma$ are computed once over $\mathcal{D}_\ell \cup \mathcal{D}_u$ and reused for all training and evaluation. For BIRDeep, normalization statistics are computed separately over the combined BIRDeep + BirdSet XCM pool.

\noindent \textbf{Bounding box coordinate conversion.}
Expert annotations are provided in (Hz, seconds). We convert to spectrogram coordinates as follows. Time: $t = \lfloor \mathrm{sec} / 0.010 \rfloor$ (10\,ms frame shift). Frequency: we apply the HTK mel transform $m(\nu) = 2595 \log_{10}(1 + \nu / 700)$, distribute 128 bins linearly in mel space, and map each annotation boundary to the nearest bin center. We clip to valid bounds and enforce positive area to ensure $t^{(2)} > t^{(1)}$, $f^{(2)} > f^{(1)}$, with minimum 1 bin and 1 frame. Because the mel filterbank covers 20--8000\,Hz, annotations with frequency content entirely above 8\,kHz are effectively excluded, and those partially above 8\,kHz are clipped to the representable range. For chunked data, annotation times are adjusted relative to the chunk start; annotations spanning chunk boundaries are included in each overlapping chunk and clipped accordingly, with events shorter than 100\,ms after clipping discarded.

\noindent \textbf{Patch tokenization.}
The encoder uses non-overlapping $16 \times 16$ patches, producing a $64 \times 8$ token grid (time $\times$ frequency). We reshape tokens to a feature map $E \in \mathbb{R}^{C \times 64 \times 8}$. The detector's FPN operates at multiple pyramid levels with strides $\{s_k\}$; box coordinates are converted to each level by dividing corners by the level stride. During augmentation, boxes are transformed jointly with the spectrogram and clipped to valid bounds.

\subsection{Dataset Statistics and OOD Construction}
\label{app:dataset_stats}

\begin{table}[t]
\caption{\textbf{Dataset statistics for both domains.} Summary of labeled, unlabeled, and OOD evaluation data after preprocessing into 10.24\,s log-mel spectrogram chunks. See Table~\ref{tab:birdeep_sites} for BIRDeep site-to-habitat mapping.}
\label{tab:dataset_stats}
\centering
\scriptsize
\setlength{\tabcolsep}{3pt}
\renewcommand{\arraystretch}{1.0}
\resizebox{0.99\linewidth}{!}{
\begin{tabular}{@{}l|ccc|cccc@{}}
\toprule
& \multicolumn{3}{c|}{\textbf{Rainforest}} & \multicolumn{4}{c}{\textbf{BIRDeep}} \\
\textbf{Split} & \textbf{\# Chunks} & \textbf{\# Boxes} & \textbf{Avg./Chunk} & \textbf{\# Chunks} & \textbf{\# Boxes} & \textbf{Avg./Chunk} & \textbf{Sites} \\
\midrule
Labeled Train   & 11,726 & 74,764  & 6.38 & 3,558 & 4,362 & 1.23 & AM1,2,3,8,10,11,16 \\
Labeled Val     & 2,513  & 15,598  & 6.21 & 762   & 902   & 1.18 & same \\
Labeled Test    & 2,513  & 16,235  & 6.46 & 762   & 990   & 1.30 & same \\
\midrule
Temporal-OOD    & 286    & 3,400   & 11.89 & 671   & 1,716 & 2.56 & AM4, AM8 \\
Site-OOD        & 550    & 6,492   & 11.80 & 1,309 & 1,613 & 1.23 & AM4, AM15 \\
\midrule
Unlabeled       & 481,245 & --- & --- & 7,051 & --- & --- & All 9 \\
+ External pretrain & --- & --- & --- & 900,000 & --- & --- & --- \\
\bottomrule
\end{tabular}
}
\vspace{-4pt}
\end{table}

Table~\ref{tab:dataset_stats} summarizes the labeled, unlabeled, and OOD evaluation data for both domains after preprocessing.

\noindent \textbf{Rainforest domain.} Our recordings were collected from 2017 to 2019 from 15 sites in East Kalimantan, Indonesia. The labeled dataset contains approximately 24 hours of annotated audio (23.93\,h), while the unlabeled pool contains approximately 661.7 hours of recordings in total, corresponding to about 27.6 days of audio and 481,245 spectrogram chunks used for masked audio pretraining. The labeled set contains 16,752 spectrogram chunks and 106,597 annotated time--frequency boxes in total, with 13,170 non-empty chunks (78.6\%). Under the in-domain split, the training, validation, and test partitions contain 11,726, 2,513, and 2,513 chunks, respectively, with similar average box density across splits. For out-of-distribution evaluation, we use two annotated settings: a temporal-OOD set from sites 13A and 13B containing 286 chunks and 3,400 boxes, and an unseen-site OOD set containing 550 chunks and 6,492 boxes.

\noindent \textbf{Bird domain.} The BIRDeep dataset \citep{marquez2025birdeep} contains 641 recordings from 9 sites across 4 habitat types in Do\~{n}ana National Park, Spain. See Table~\ref{tab:birdeep_sites} for site-to-habitat mapping. Expert annotations provide time--frequency bounding boxes for 38 bird species. We process all audio identically to the rainforest domain. After chunking, we obtain 6,633 spectrogram chunks with 6,254 bounding boxes. We use 7 sites spanning all 4 habitats for a 70/15/15 train/val/test split, corresponding to 3,558/762/762 chunks. For site-OOD evaluation, we hold out AM4 (low shrubland) and AM15 (marshland), entirely unseen during training, producing 1,309 chunks with 1,613 boxes. For temporal-OOD evaluation, we hold out two recording dates, producing 671 chunks with 1,716 boxes. There is partial overlap between the two OOD sets as AM4 appears in both, but each tests a distinct shift type. For masked pretraining, we combine unlabeled BIRDeep recordings that have 7,051 chunks with BirdSet XCM \citep{rauch2024birdset} that have 900k chunks and compute normalization statistics on this combined pool.

Both domains share the same preprocessing pipeline and evaluation protocol, enabling direct cross-domain comparison. 

\begin{table}[t]
\caption{\textbf{BIRDeep recording sites and habitat types.} All 9 autonomous monitoring sites in Do\~{n}ana National Park, Spain. Sites AM4 and AM15 are held out for site-OOD evaluation; all others are used for labeled training/validation/test.}
\label{tab:birdeep_sites}
\centering
\small
\renewcommand{\arraystretch}{1.06}
\setlength{\tabcolsep}{5pt}
\begin{tabular}{@{}llll@{}}
\toprule
\textbf{Recorder} & \textbf{Place Name} & \textbf{Habitat} & \textbf{Split} \\
\midrule
AM1  & Monteblanco      & Low shrubland  & Train \\
AM2  & Sabinar          & High shrubland & Train \\
AM3  & Ojillo           & High shrubland & Train \\
AM4  & Pozo Sta Olalla  & Low shrubland  & Site-OOD \\
AM8  & Torre Palacio    & Ecotone        & Train \\
AM10 & Pajarera         & Ecotone        & Train \\
AM11 & Ca\~{n}o Martinazo & Ecotone      & Train \\
AM15 & Cancela Mill\'{a}n & Marshland    & Site-OOD \\
AM16 & Juncabalejo      & Marshland      & Train \\
\bottomrule
\end{tabular}
\vspace{-4pt}
\end{table}

\subsection{Adapter Designs.}
\label{app:adapter}

We study three adapter variants that bridge the pretrained ViT-B encoder ($C{=}768$, spatial $64{\times}8$) to the FPN-based detection head.

\begin{enumerate}[label=(\roman*)]
\item \textbf{Conv adapter:} Three $3{\times}3$ conv layers ($768{\to}512{\to}256{\to}256$) with ReLU, preserving spatial resolution at $64{\times}8$. Restores short-range locality lost by global ViT attention but cannot resolve sub-patch events.

\item \textbf{Anisotropic adapter:} Stem conv ($768{\to}512$), residual block, separable $3{\times}1$ / $1{\times}3$ convolutions (Proposition~\ref{prop:separable}), channel reduction ($512{\to}256$), second residual block, and nearest-neighbor frequency upsampling ($8{\to}32$). Operates at $64{\times}32$.

\item \textbf{Upsampling adapter} (default): $1{\times}1$ channel reduction ($768{\to}256$), residual block at base resolution, learnable transposed-conv upsampling (${\times}2$ time, ${\times}4$ frequency), separable $3{\times}1$ / $1{\times}3$ anisotropic refinement at high resolution (Proposition~\ref{prop:separable}), and a final $3{\times}3$ residual refinement block. Operates at $128{\times}32$ with effective strides $8{\times}4$ (Proposition~\ref{prop:optimal_upsample}).
\end{enumerate}

All adapters output 256 channels and add 2--4\,M parameters, keeping $>$85\% of capacity in the frozen pretrained encoder.

\subsection{Architecture Details}
\label{app:architecture}

\noindent \textbf{Neck and fusion.} Starting from the adapter output (P3), we build P4/P5 via stride-2 $3 \times 3$ convs. We use lateral $1 \times 1$ projections and top-down fusion with non-negative learnable weights normalized by their sum, and apply lightweight smoothing blocks (mixing $3 \times 3,3 \times 1,1 \times 3$ ) to preserve narrow-band detail while expanding temporal context.

\noindent \textbf{Detector heads.} We use FCOS, predicting per-location classification, centerness, and box regression with center sampling on the spectrogram lattice. We also use ATSS assignment \citep{zhang2020atss}, Quality Focal Loss (QFL) \citep{li2020qfl} for classification, and GIoU \citep{rezatofighi2019giou} for regression.

\subsection{Optimization details.}
\label{app:hyperparam}


For masked-audio pretraining, we use AdamW with mixed-precision training and gradient clipping. The pretrained MAST encoder is initialized from the checkpoint pretrained on AudioSet \citep{gemmeke2017audio} and optimized on unlabeled rainforest spectrograms using a masked reconstruction objective. Table~\ref{tab:pretrain_hparams} summarizes the main pretraining hyperparameters.

\begin{table}[t]
\caption{\textbf{Masked Audio pretraining hyperparameters.} Main optimization settings used for masked-audio pretraining on unlabeled rainforest mel-spectrograms.}
\label{tab:pretrain_hparams}
\centering
\small
\renewcommand{\arraystretch}{1.06}
\setlength{\tabcolsep}{6pt}
\begin{tabular}{@{}ll@{}}
\toprule
\textbf{Hyperparameter} & \textbf{Value} \\
\midrule
Model & \texttt{mae\_vit\_base\_patch16} \\
Optimizer & AdamW \\
AdamW betas & $(0.9,\,0.95)$ \\
Batch size & 64 \\
Training epochs & 400 \\
Base learning rate & $5\times10^{-5}$ \\
Minimum learning rate & $5\times10^{-7}$ \\
Warmup epochs & 20 \\
Weight decay & 0.05 \\
Mask ratio & 0.75 \\
Time masking probability & 0.6 \\
Frequency masking probability & 0.3 \\
Gradient clipping & 1.0 \\
Mixed precision & AMP \\
\bottomrule
\end{tabular}
\vspace{-4pt}
\end{table}

For MAST detector fine-tuning, we also use AdamW together with a warmup--cosine learning-rate schedule. The detector is initialized from the best masked-audio pretraining checkpoint and trained for 200 epochs on the labeled spectrogram data. For MAST detectors, we use a smaller effective learning rate for the pretrained backbone and larger learning rates for newly introduced detector components. We train with mixed moderate spectrogram augmentation, exponential moving average, and gradient clipping. Table~\ref{tab:detector_hparams} summarizes the main fine-tuning hyperparameters used for the MAST model.

\begin{table}[t]
\caption{\textbf{Detector fine-tuning hyperparameters.} Main optimization settings used for MAST detector training.}
\label{tab:detector_hparams}
\centering
\small
\renewcommand{\arraystretch}{1.06}
\setlength{\tabcolsep}{6pt}
\begin{tabular}{@{}ll@{}}
\toprule
\textbf{Hyperparameter} & \textbf{Value} \\
\midrule
Backbone initialization & Best masked-audio pretraining checkpoint \\
Adapter type & Upsampling \\
Neck & FPN P2-P5 \\
Optimizer & AdamW \\
Batch size & 64 \\
Training epochs & 200 \\
Base learning rate & $3\times10^{-4}$ \\
Backbone LR multiplier & 0.1 \\
Head LR multiplier & 1.0 \\
Weight decay & $1\times10^{-3}$ \\
Gradient clipping & 1.0 \\
Learning-rate schedule & Warmup + cosine decay \\
EMA decay & 0.9995 \\
Augmentation probability & 0.9 \\
Contrastive weight & 0.5 \\
Contrastive temperature & 0.1 \\
Contrastive feature dimension & 256 \\
Contrastive projection dimension & 128 \\
Contrastive negatives per image & 32 \\
\bottomrule
\end{tabular}
\vspace{-4pt}
\end{table}

For staged self-training, we train the model in two phases with different optimization regimes. Stage~1 starts from the best supervised checkpoint and trains on large pseudo-labeled unlabeled data using confidence weighting, moderate augmentation, EMA, contrastive learning, and conservative gradual unfreezing. Its goal is to expand coverage under distribution shift while limiting drift from noisy pseudo supervision. Stage~2 then initializes from the best Stage~1 checkpoint and fine-tunes on clean expert-labeled data with a lower base learning rate, slightly weaker augmentation, and a more aggressive gradual-unfreezing schedule. This second phase re-anchors the detector to high-quality annotations while preserving the broader domain exposure acquired in Stage~1. In both stages, we use AdamW, warmup--cosine scheduling, and gradient clipping. Table~\ref{tab:staged_st_hparams} summarizes the main hyperparameters.

\noindent \textbf{Compute resources.} All experiments were conducted on a single NVIDIA A100 GPU (80\,GB).

\begin{table}[t]
\caption{\textbf{Staged self-training hyperparameters.} Main optimization settings for Stage~1 pseudo-labeled exploration and Stage~2 expert-labeled refinement. }
\label{tab:staged_st_hparams}
\centering
\small
\renewcommand{\arraystretch}{1.06}
\setlength{\tabcolsep}{6pt}
\begin{tabular}{@{}lll@{}}
\toprule
\textbf{Hyperparameter} & \textbf{Stage 1} & \textbf{Stage 2} \\
\midrule
Training data & Pseudo-labeled unlabeled data & Expert-labeled data \\
Initialization & Best supervised checkpoint & Best Stage~1 checkpoint \\
Adapter / neck & Upsampling + FPN P2-P5 & Upsampling + FPN P2-P5\\
Optimizer & AdamW & AdamW \\
Batch size & 64 & 64 \\
Training epochs & 80 & 40 \\
Base learning rate & $1\times10^{-4}$ & $5\times10^{-5}$ \\
Weight decay & $1\times10^{-3}$ & $1\times10^{-3}$  \\
Gradient clipping & 1.0 & 1.0 \\
Scheduler & Warmup + cosine decay & Warmup + cosine decay \\
Warmup epochs & 8 & 4 \\
Backbone LR multiplier & 0.1 & 0.02 \\
Adapter LR multiplier & 1.0 & 1.0 \\
Head LR multiplier & 1.0 & 1.0 \\
Unfreezing warmup & 20 epochs & 3 epochs \\
Unfreezing interval & Every 10 epochs & Every 6 epochs \\
Blocks per stage & 3 & 6 \\
Unfrozen backbone LR multiplier & 0.005 & 0.05 \\
Confidence weighting & Enabled & Not Enabled \\
Pseudo-label confidence threshold & 0.10 & -- \\
Data augmentation prob. & 0.9 & 0.8 \\
EMA decay & 0.9995 & 0.9995 \\
Contrastive loss weight & 0.3 &  0.2 \\
Contrastive temperature & 0.1 & 0.1 \\
Contrastive feature / proj. dim & 256 / 128 & 256 / 128 \\
\bottomrule
\end{tabular}
\vspace{-4pt}
\end{table}

\noindent \textbf{Pseudo-label confidence threshold.}
The threshold $q=0.10$ reflects the low-confidence regime typical of bioacoustic detectors, where true positive scores concentrate well below those seen in natural image detection due to faint, distant, or overlapping animal calls. At this operating point the threshold retains the majority of true positives while filtering out the lowest-scoring false alarms. Remaining label noise is further suppressed by confidence weighting, which down-weights uncertain pseudo-labels in proportion to their average detection score (Section~\ref{sec:method}, Proposition~\ref{prop:confidence_weighting}).

\subsection{Augmentations}
\label{app:data_augmentations}

During supervised detector training, we apply moderate spectrogram augmentations with box-consistent transformations. All augmentations operate on $128 \times 1024$ log-mel spectrogram chunks and transform the associated time--frequency boxes jointly with the input. After each geometric augmentation, boxes are clipped to valid spectrogram boundaries and filtered if they become too small to remain meaningful training targets.

Our augmentation pipeline includes the following components:

\begin{itemize}[leftmargin=14pt]
    \item \textbf{Time shift.} We randomly shift the spectrogram along the time axis with zero/min-value padding, and shift all box time coordinates accordingly.
    \item \textbf{Frequency shift.} We randomly shift the spectrogram along the mel-frequency axis and apply the same offset to the box frequency coordinates.
    \item \textbf{Time warp.} We apply mild temporal warping by rescaling the spectrogram along the time axis and mapping box boundaries through the same transformation.
    \item \textbf{Frequency masking.} We mask a random frequency band in the style of SpecAugment \citep{park2019specaugment}. Boxes with heavy overlap with the masked band are removed.
    \item \textbf{Time masking.} We mask a random temporal segment and similarly discard boxes that are largely occluded by the mask.
    \item \textbf{Gaussian noise and volume scaling.} We apply small additive Gaussian noise and moderate global intensity scaling to improve robustness to recording and gain variation.
    \item \textbf{Background mixing.} We optionally mix a spectrogram with background-noise samples drawn from a curated noise bank from our training data containing common environmental and anthropogenic sounds such as rain, thunder, aircraft, chainsaw, and vehicle noise, which improves robustness to realistic acoustic interference.
    \item \textbf{Mixup and CutMix.} We additionally implemented Mixup and CutMix for spectrograms by combining two training examples and merging or clipping their boxes accordingly. These augmentations are designed to simulate overlapping events and partial occlusion.
\end{itemize}

To avoid destabilizing detection, we use only mild transformation strengths and sample a small random subset of augmentations for each training example. This design preserves biologically meaningful event structure while improving robustness to temporal shifts, frequency variation, and background-condition changes.

\subsection{Evaluation Details}
\label{app:eval}

We evaluate all models on binary time--frequency detection of ``any animal sound'' using five metrics: precision, recall, F1 score, mean average precision (mAP), and mean intersection-over-union (mIoU). A prediction is counted as a true positive if it overlaps a ground-truth box with IoU $\geq 0.5$. Remaining predictions are false positives and unmatched ground-truth boxes are false negatives. We apply class-agnostic NMS to detector outputs with fixed overlap thresholds and confidence thresholds tuned on the in-domain validation set, and report metrics aggregated over all chunks in each test split. OOD split construction is described in Sec.~\ref{app:dataset_stats}. Unless otherwise stated, all numbers collapse all sonotypes into a single foreground class.

\section{Additional Experimental Results}

\begin{table}[tb!]
\caption{\textbf{Adapter design ablation on cross-site OOD detection on rainforest domain.}
All variants use the same MAST detector and training setup; only the adapter module is changed. \textit{Higher is better}.}
\label{tab:adapter_ablation_ood}
\centering
\small
\renewcommand{\arraystretch}{1.08}
\resizebox{0.99\linewidth}{!}{
\begin{tabular}{@{}l|ccccc@{}}
\toprule
\textbf{Adapter Variant} & \textbf{Precision} & \textbf{Recall} & \textbf{F1} & \textbf{mAP} & \textbf{Mean IoU} \\
\midrule
Conv adapter & 0.5928 & 0.3517 & 0.4415 & \textbf{0.3004} & 0.6644 \\
Anisotropic adapter & 0.6432 & \textbf{0.3535} & \textbf{0.4563} & 0.2866 & 0.6466 \\
Upsampling adapter (Main) & \textbf{0.7562} & 0.3169 & 0.4466 & 0.2573 & \textbf{0.6824} \\
\bottomrule
\end{tabular}
}
\vspace{-4pt}
\end{table}

\begin{figure*}[!t]
  \centering
  \begin{subfigure}[t]{0.32\linewidth}
    \centering
    \includegraphics[width=\linewidth]{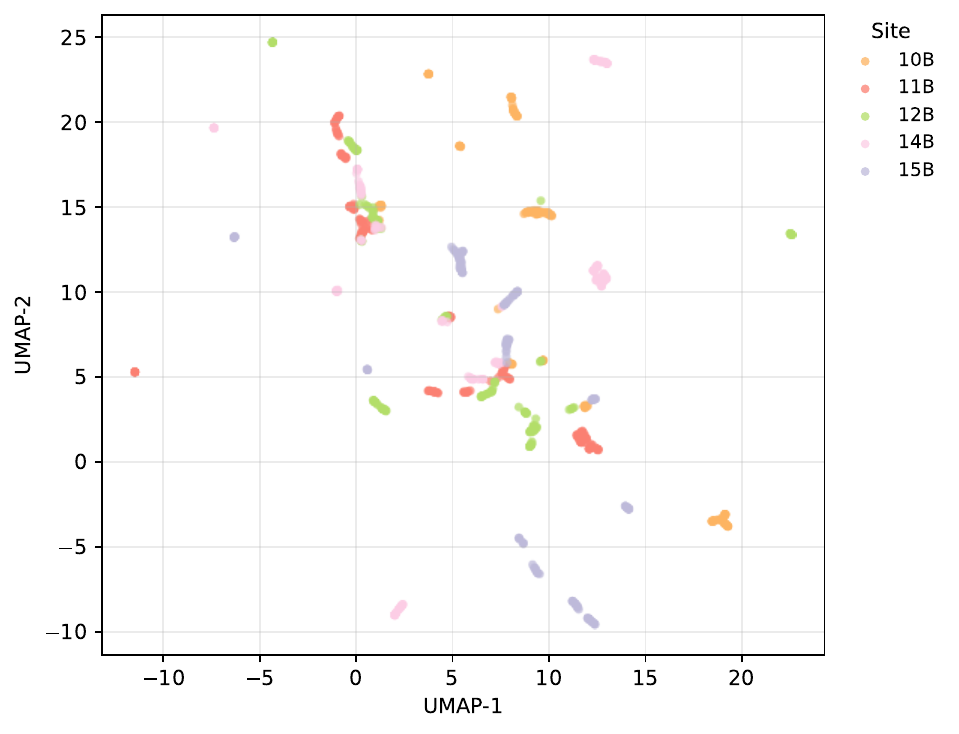}
    \caption{Colored by recording site.}
    \label{fig:umap_site}
  \end{subfigure}
  \hfill
  \begin{subfigure}[t]{0.32\linewidth}
    \centering
    \includegraphics[width=\linewidth]{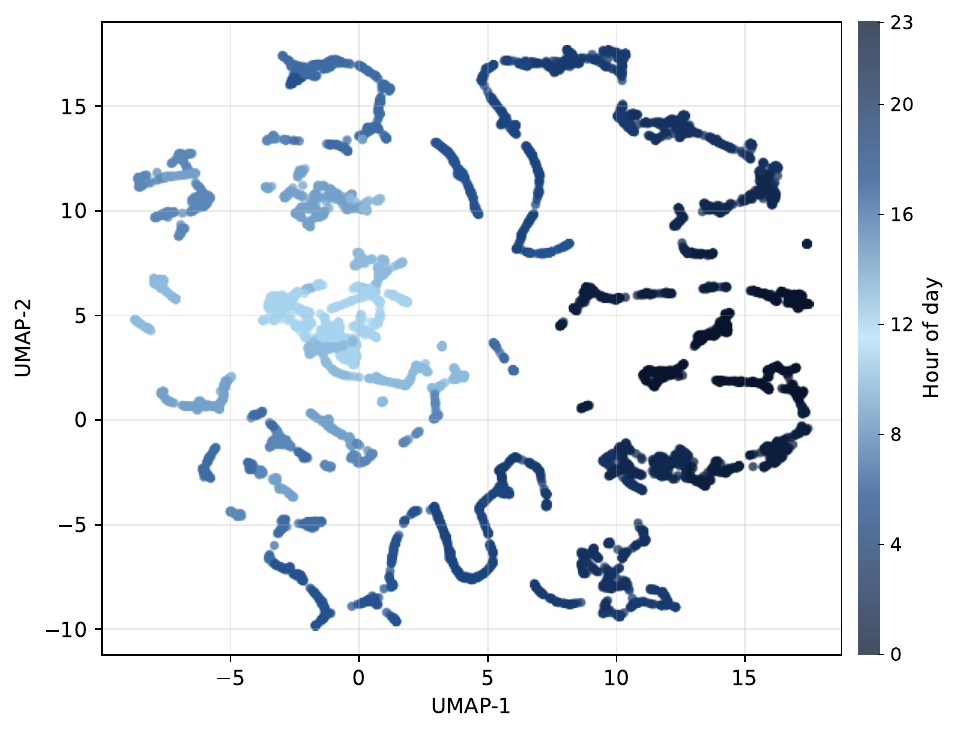}
    \caption{Colored by hour of day.}
    \label{fig:umap_time}
  \end{subfigure}
  \hfill
  \begin{subfigure}[t]{0.32\linewidth}
    \centering
    \includegraphics[width=\linewidth]{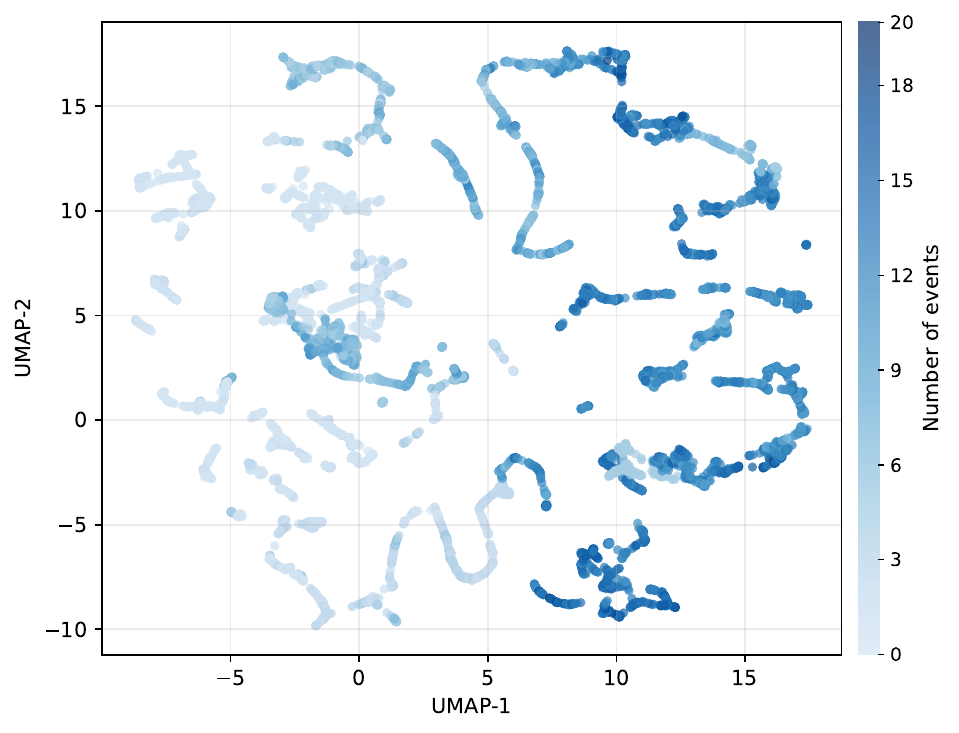}
    \caption{Colored by annotated event count.}
    \label{fig:umap_eventcount}
  \end{subfigure}
  \caption{\textbf{UMAP visualization of pretrained encoder embeddings before detector fine-tuning on rainforest domain.} Each point is one mel-spectrogram chunk projected from the pretrained representation space into two dimensions. (a)~Coloring by recording site reveals spatial clustering, indicating that the encoder captures site-specific acoustic signatures. (b)~Coloring by hour of day shows a diel gradient, reflecting systematic changes in background noise and species activity across the 24-hour cycle. (c)~Coloring by annotated event count shows that chunks with more animal-sound activity occupy partially distinct regions from background-dominated chunks, suggesting that masked-audio pretraining captures biologically relevant variation before any supervised training. Together, these patterns provide qualitative evidence of both the usefulness of domain-matched pretraining and the presence of temporal and cross-site distribution shift.}
  \label{fig:umap_pretrained}
\end{figure*}

\subsection{Per-Round Self-Training Results on BIRDeep}
\label{app:birdeep_st_rounds}

Table~\ref{tab:birdeep_st_rounds} reports per-round self-training results on BIRDeep across in-domain, temporal-OOD, and cross-site-OOD evaluation. Self-training produces large mAP gains across all settings: in-domain mAP rises from 0.601 at Round~0 to 0.778 at Round~2, a +0.18 absolute improvement that surpasses all baselines including FCOS at 0.665. Under cross-site shift, mAP improves from 0.380 to 0.441, and under temporal shift from 0.384 to 0.428. F1 also improves consistently across OOD settings, reaching 0.643 cross-site and 0.649 temporal at Round~2. These trends mirror the monotonic OOD gains observed on the rainforest domain (Figure~\ref{fig:self-training}), confirming that the two-stage self-training curriculum generalizes across ecologically distinct domains.

\begin{table}[t]
\caption{\textbf{Per-round self-training results on BIRDeep.} Round~0 is the supervised MAST detector before self-training. \textit{Higher is better}.}
\label{tab:birdeep_st_rounds}
\centering
\small
\renewcommand{\arraystretch}{1.05}
\setlength{\tabcolsep}{5pt}
\begin{tabular}{@{}llccccc@{}}
\toprule
\textbf{Setting} & \textbf{Round} & \textbf{Prec.} & \textbf{Rec.} & \textbf{F1} & \textbf{mAP} & \textbf{mIoU} \\
\midrule
& Round 0 (no ST) & \textbf{0.7824} & 0.7010 & 0.7395 & 0.6011 & 0.6822 \\
In-domain & Round 1 & 0.7054 & 0.8248 & 0.7604 & 0.6870 & 0.7080 \\
& Round 2 & 0.6709 & \textbf{0.9061} & \textbf{0.7709} & \textbf{0.7783} & \textbf{0.7339} \\
\midrule
& Round 0 (no ST) & \textbf{0.7618} & 0.4808 & 0.5895 & 0.3838 & \textbf{0.6083} \\
Temporal OOD & Round 1 & 0.7253 & 0.5780 & 0.6434 & 0.3840 & 0.6020 \\
& Round 2 & 0.6755 & \textbf{0.6247} & \textbf{0.6491} & \textbf{0.4280} & 0.5903 \\
\midrule
& Round 0 (no ST) & 0.6476 & 0.5059 & 0.5680 & 0.3803 & \textbf{0.6001} \\
Cross-site OOD & Round 1 & \textbf{0.6893} & 0.5914 & 0.6367 & 0.3953 & 0.5963 \\
& Round 2 & 0.6522 & \textbf{0.6438} & \textbf{0.6434} & \textbf{0.4414} & 0.5943 \\
\bottomrule
\end{tabular}
\vspace{-4pt}
\end{table}

\subsection{Additional Adapter Design Ablation.}
\label{app:more_ablation}
Table \ref{tab:adapter_ablation_ood} compares adapter designs while holding the backbone, training recipe, and detection head fixed. All three adapters enable strong OOD performance, but they emphasize different operating points. The anisotropic adapter achieves the best F1 by slightly improving recall while keeping precision competitive, which aligns with the common structure of animal calls being temporally extended and frequency-concentrated. The convolutional adapter achieves the strongest mAP, suggesting that a simple locality-restoring module can already improve ranking across diverse OOD backgrounds. We use the upsampling adapter in our main result because it yields the highest precision and mean IoU, producing more confident and spatially accurate detections, properties that directly benefit downstream self-training, where high-precision pseudo-labels reduce confirmation bias. Overall, these results show that adapter choice primarily controls the precision--recall and box-quality trade-off, while the combination of in-domain pretraining and the detection stack remains the dominant driver of OOD robustness.

\begin{figure}[t]
\centering
\includegraphics[width=0.78\linewidth]{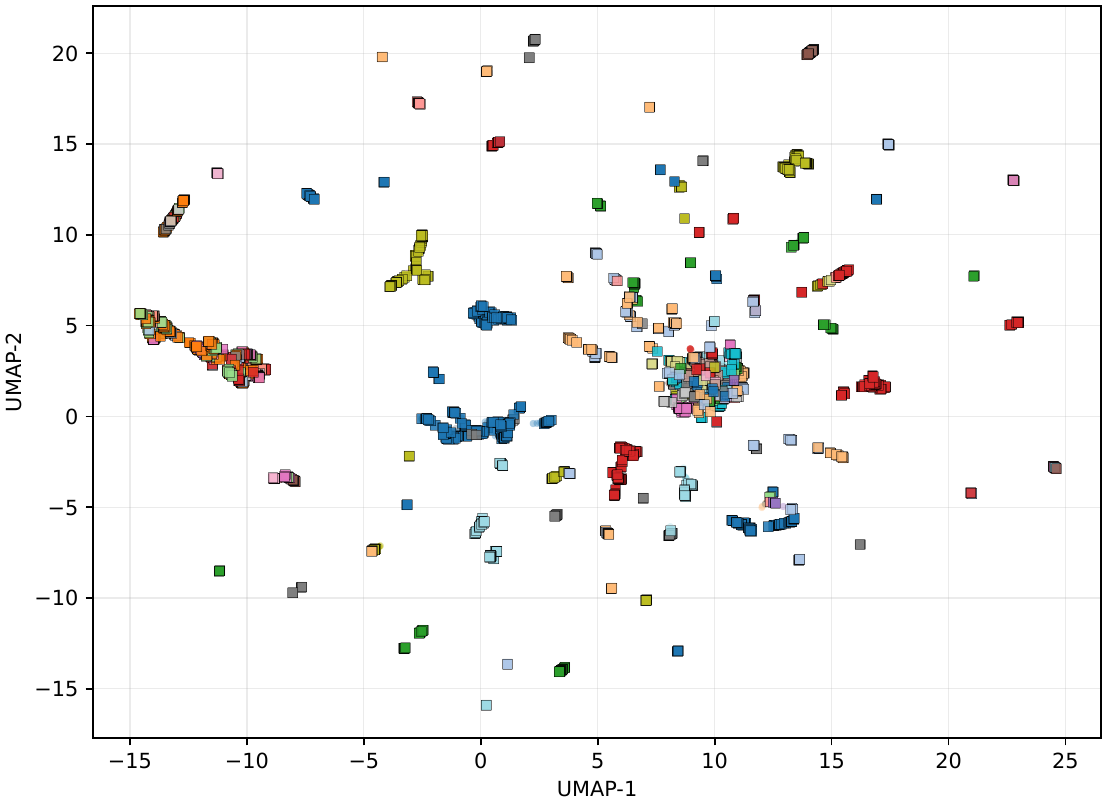}
\caption{\textbf{Box-level embedding space induced by MAST detections on rainforest domain.} Each point corresponds to a ground-truth or predicted time--frequency box, embedded by pooling its corresponding region from the pretrained encoder feature map and projected with UMAP. Boxes with similar acoustic structure form partially coherent clusters, and high-quality predictions tend to lie near ground-truth embeddings of the same or related sonotypes. This suggests a practical downstream use of MAST: detected sound events can be retrieved against annotated examples for tentative zero-shot classification, while acoustically distinct clusters can be surfaced to experts as candidate novel sounds. Colors denote sonotypes; circles correspond to predicted boxes and squares correspond to ground-truth boxes.}
\label{fig:box_embedding_application}
\end{figure}

\subsection{Representation Analysis of the Pretrained Encoder}
\label{app:representations}

To better understand what the pretrained encoder captures before detector fine-tuning, we visualize spectrogram-level embeddings using UMAP \citep{mcinnes2020umapuniformmanifoldapproximation}. For each spectrogram chunk, we extract the pretrained encoder representation, project it into two dimensions, and color the same embedding space by three attributes: recording site, hour of day, and annotated event count.

Figure~\ref{fig:umap_pretrained} provides two qualitative insights. First, the visible gradient with respect to annotated event count suggests that masked-audio pretraining captures variation related to animal-sound activity, even before detector fine-tuning. Second, the organization by site and hour of day indicates that the learned representation also reflects domain-dependent structure associated with habitat, diel cycle, and background acoustics. These patterns qualitatively support both the usefulness of masked-audio pretraining for downstream detection and the presence of temporal and cross-site distribution shift in rainforest soundscapes. We emphasize that UMAP is only a qualitative visualization tool rather than a quantitative measure of separability, but it offers an intuitive view of why both strong pretraining and domain shift aware adaptation are important in this setting.

\begin{table}[!t]
\caption{\textbf{Retrieval-based sonotype classification from box embeddings on rainforest domain, with 131 sonotypes.} We evaluate nearest-neighbor retrieval against annotated training boxes in the learned embedding space. \emph{Oracle-box classification} uses only ground-truth boxes, while \emph{detected-box retrieval} uses predicted boxes matched to ground truth at IoU $\geq 0.5$. Higher is better for Top-1, Top-5; lower is better for mean rank.}
\label{tab:retrieval_classification}
\centering
\small
\renewcommand{\arraystretch}{1.06}
\setlength{\tabcolsep}{6pt}
\begin{tabular}{@{}lccccc@{}}
\toprule
\textbf{Setting} & \textbf{\# Queries} & \textbf{\# Labels} & \textbf{Top-1} & \textbf{Top-5} & \textbf{Mean Rank} \\
\midrule
Oracle-box classification (GT boxes) & 14,986 & 131 & 0.723 & 0.902 & 5.76 \\
Detected-box retrieval (matched preds) & 8,414 & 104 & 0.961 & 0.982 & 3.22 \\
\bottomrule
\end{tabular}
\vspace{-4pt}
\end{table}

\subsection{Application: Box-Level Embeddings for Retrieval and Zero-Shot Classification}
\label{app:box_embedding_application}

An additional advantage of MAST box-level localization is that each detected sound event can be embedded and compared directly in a shared representation space. After detection, we extract box-level features from the pretrained encoder by pooling the corresponding time--frequency region from the feature map, yielding one embedding per detected event. This enables a retrieval-style downstream workflow in which detected boxes can be matched to annotated training boxes, grouped by acoustic similarity, or provide detections to experts as candidate novel sounds.

Figure~\ref{fig:box_embedding_application} illustrates this application qualitatively. Ground-truth and predicted boxes form partially structured clusters in the learned embedding space, and predictions with high overlap to ground truth tend to lie near embeddings of the same or related sonotypes. To quantify this effect, we also evaluate retrieval-based sonotype classification in two settings: (i) \emph{oracle-box classification}, which uses only ground-truth boxes to test whether the embedding space discriminates sonotypes, and (ii) \emph{detected-box retrieval}, which uses predicted boxes matched to ground truth at IoU $\geq 0.5$ and assigns each matched prediction the sonotype of its best-matching ground-truth box. In both cases, we retrieve nearest neighbors from the annotated training set and report Top-1, Top-5, and mean rank.

As shown in Table~\ref{tab:retrieval_classification}, the learned box embeddings support strong retrieval-based classification. Oracle-box classification achieves 72.3\% Top-1 and 90.2\% Top-5 accuracy over 131 sonotypes, indicating that the embedding space already captures substantial sonotype structure. Retrieval on matched predicted boxes is even stronger with 96.1\% Top-1 and 98.2\% Top-5, reflecting that high-quality detections tend to align with acoustically prototypical events. In hyperdiverse and acoustically understudied environments, such a workflow is practically useful: detected sounds can be assigned tentative labels through retrieval against known examples, while acoustically distinct clusters can be prioritized for expert review as potentially novel or rare events.

\subsection{Label Budget and Sampling Strategy}
\label{app:label_budget}
\label{app:data_scaling}

\begin{figure}[t]
\centering
\includegraphics[width=0.78\linewidth]{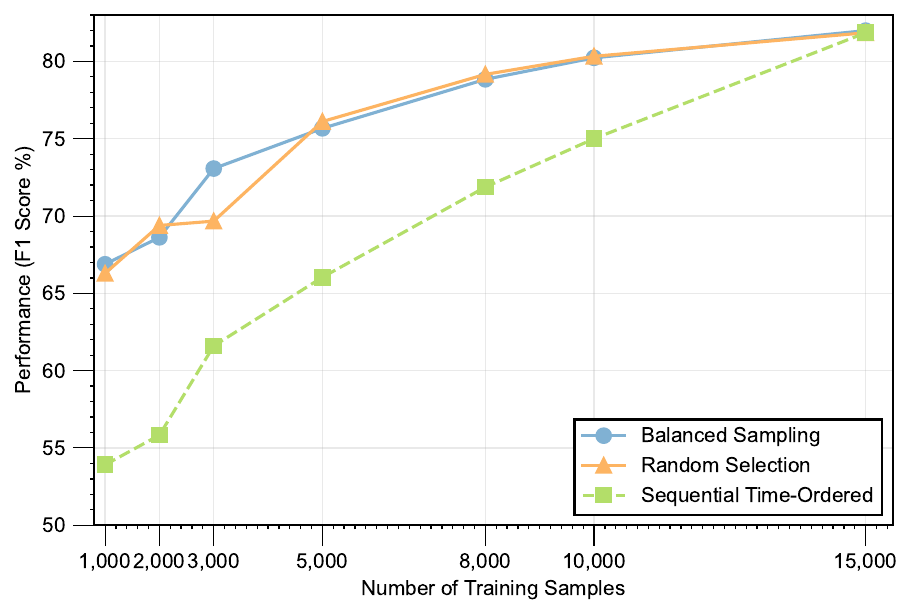}
\caption{\textbf{Label-efficiency study under different sampling strategies on rainforest domain with FCOS}. We vary the number of labeled training chunks (x-axis) and report F1 on the held-out test set (y-axis). Time-balanced sampling across the 24-hour diel cycle consistently outperforms sequential labeling at low budgets, and remains slightly better than random sampling for most budgets. As the label budget increases, the gap narrows and all strategies converge to F1 $\approx$ 0.82 at $N{=}15000$.}
\label{fig:data-scaling}
\end{figure}

We study how label allocation affects performance under a fixed annotation budget. Starting from the full labeled training split, we subsample $N$ chunks for $N \in \{250, 500, 1000, \ldots\}$ to simulate different labeling budgets. For each $N$, we compare three sampling strategies:

\begin{itemize}
\item \textbf{Random.} Uniformly sample $N$ chunks from the training split without replacement.
\item \textbf{Time-balanced.} Group chunks by their hour-of-day and sample approximately $N/24$ chunks per hour, ensuring coverage of both day and night.
\item \textbf{Sequential.} Take the first $N$ chunks in temporal order, mimicking a naive labeling strategy where annotators start at the beginning and stop once a budget is exhausted.
\end{itemize}

For each pair we train a FCOS detector from scratch with identical optimization and augmentation settings and evaluate on the same in-domain splits. Figure~\ref{fig:data-scaling} reports F1 across budgets. At low budgets, the sampling policy is a first-order factor: sequential labeling consistently underperforms, while time-balanced sampling yields the strongest results, achieving F1 of 0.669 vs.\ 0.539, a +0.130 F1 gain at $N=1000$ . This pattern highlights the impact of temporal distribution shift even within a single site at a single day: labels drawn from one contiguous time range provide limited coverage of acoustic conditions, whereas time-balanced sampling improves robustness by exposing the model to broader diel variation. As the labeling budget increases, the gap between strategies narrows. Practically, this suggests that when annotation budgets are limited, it is better to spread labeling effort across times of day than to annotate one continuous stretch of audio. This finding is especially relevant because sequential labeling is the most natural default in practice: annotators typically open a recording and label forward in chronological order until the budget is exhausted, inadvertently concentrating all labels within a narrow temporal window.

\subsection{Qualitative Detection Comparison}
\label{app:qualitative}

To provide visual intuition for the quantitative gaps reported in Table~\ref{tab:ood_combined}, we randomly sample two 10.24\,s chunks from the cross-site OOD test set and visualize the predictions of every method side by side as in Figure~\ref{fig:qualitative}.

Several patterns are apparent. Faster R-CNN fails almost entirely under cross-site shift, producing few or no predictions and yielding near-zero recall. FCOS and DETR exhibit the opposite failure mode: both generate numerous spurious predictions scattered across the spectrogram, resulting in very low precision despite moderate recall, suggesting overfitting to the spectral statistics of the labeled site. AudioMAE shows partial improvement but still misses many events. In contrast, MAST predictions closely track the ground-truth boxes with markedly fewer false positives, consistent with its substantially higher OOD F1 in Table~\ref{tab:ood_combined}.

\begin{figure}[t]
\centering
\includegraphics[width=\linewidth]{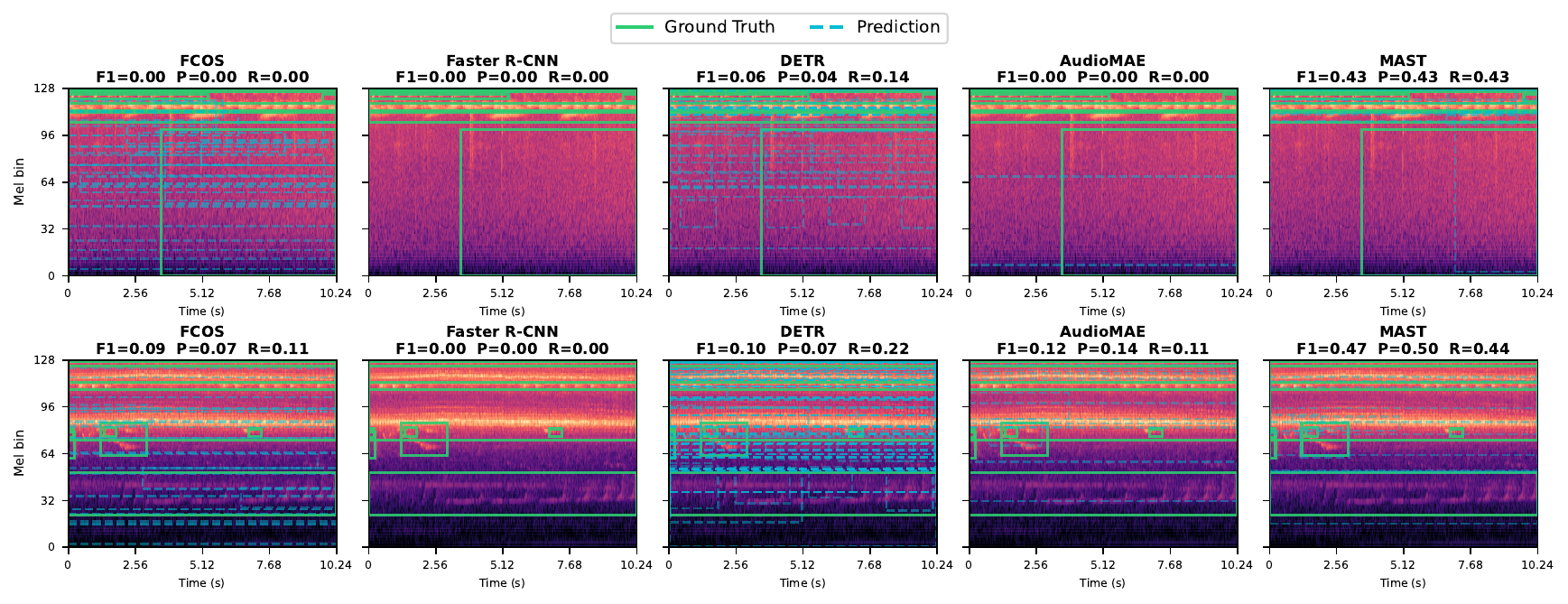}
\caption{Qualitative detection comparison on two randomly sampled cross-site OOD chunks on rainforest domain. Ground-truth boxes are shown in green solid lines; predicted boxes in blue dashed lines. Baseline detectors produce numerous false positives or miss most events entirely, while MAST localizes the majority of sound events with fewer spurious detections.}
\label{fig:qualitative}
\end{figure}


\end{document}